\def\E{\mathbb{E}}
\newcommand{\pad}{\right)}
\newcommand{\pai}{\left(}
\newcommand{\cd}{\right]}
\newcommand{\ci}{\left[}
\newtheorem{lemma}{Lemma}
\newtheorem{remark}{Remark}
\newtheorem{propo}{Proposition}
\newtheorem{teo}{Theorem}
\newtheorem{defi}{Definition}
\newtheorem{hypo}{Hypothesis}
\newenvironment{proof}[1][Proof]{\textbf{#1.} }{\ \rule{0.5em}{0.5em}}
\newcommand{\rr}{\mathbb{R}}
\newcommand{\N}{\mathbb{N}}
\newcommand{\p}{\mathbb{P}}
\begin{document}
\title{$\,$\vskip-2cm\bf How fast extinction occurs in bisexual populations with size-depending mating dynamics?}
\author{{\sc Ehyter M. Mart\'in-Gonz\'alez}\thanks{Corresponding Author. Departamento de Matem\'aticas, División de Ciencias Naturales y Exactas, Universidad de Guanajuato,
  Guanajuato,  Mexico, ehyter.martin@ugto.mx}, \sc{Carlos Galv\'an-Galv\'an}\thanks{Departamento de Matem\'aticas, División de Ciencias Naturales y Exactas, Universidad de Guanajuato,
  Guanajuato,  Mexico, c.galvangalvan@ugto.mx},\\
  \sc{ Eduardo Calvo-Mart\'inez}\thanks{Departamento de Matem\'aticas, División de Ciencias Naturales y Exactas, Universidad de Guanajuato,
  Guanajuato,  Mexico, e.calvomartinez@ugto.mx}
}
\maketitle
\begin{abstract}
Given that extinction in a bisexual population is certain, we study a way to a\-pproximate the time when this extinction occurs. Our study is based on standard tools from Extreme Value Theory, which in practice are very easy to implement. We present the theoretical results derived from our study and provide a few numerical examples of such results.

   \bigskip
  \noindent
  \textit{Keywords and phrases}: Bisexual Galton-Watson Branching Process, Time To Extinction, Extreme Value Theory, Peaks Over Threshold, Gumbel Domain of Attraction, Branching Processes.
\end{abstract}

\section*{Introduction}

Populations with sexual reproduction can be modelled using the class of stochastic processes known as Bisexual Galton-Watson processes (BGWP), which are a natural generalization of the class of Bienamé-Galton-Watson Branching Process, in which only asexual reproduction is considered. 
The BGWP was introduced by \cite{daley} and since then, it has been widely studied by several authors (see e.g. \cite{alsmeyerrosler}, \cite{bagley}, \cite{bruss}, \cite{hulletal} and the references therein).

In particular, some conditions for certain extinction (i.e. extinction with probability one) can be found in many works such as \cite{molinaetal1}). Once that it is known that extinction is certain for a given a population whose dynamics can be modelled using a BGWP, it is natural to ask about the probability that extinction occurs after (or before) a given time of interest. This kind of questions can be answered if we know the exact distribution of the time to extinction, which it is generally difficult to determine.

In \cite{hulletal} the authors present some bounds for the distribution of the time to extinction of a subcritical BGWBP (see (\ref{criticalidad})). This subcritical case corresponds, as in the case of an asexual Galton-Watson processes, to extinction with probability one. Such bounds are useful given that there is only one female and one male who mate and reproduce; otherwise, they may become greater than one, which is not informative.

In this work we focus on providing an alternative approach for the approximation of the probability that extinction occurs after a given time $k$. Our approach is based on statistical tools from Extreme Value Theory, which in practice are very easy to implement and take into account observed data (assuming that a suitable mating function has been previously chosen).

First we prove that even under extinction with probability one, assuming some conditions, the time to extinction is an unbounded random variable such that it takes large values with very small probability (i.e. a random variable with range on $(0,\infty)$ and a light tail).

Afterwards we show that the tail of the distribution (and hence the distribution) of the time to extinction can be approximated using an exponential distribution. To be precise, we show that the time to extinction $\tau$ is the discretization of a random variable whose distribution belongs to the Gumbel domain of attraction for maxima and hence, its distribution conditioned on the event when $\tau$ is greater than a sufficiently large threshold $u$ can be approximated using the Peaks Over Threshold method (POT).

This work is organized as follows: in Section 1 we formally introduce the BGWBP and present some preliminary concepts and facts about it. We also describe the conditions imposed on our process in order for the results to be valid. In Section 2 we state and prove our main result, as well as some technical lemmas . In Section 3 we briefly recall the way the Peaks Over Threshold method works and use it in some numerical examples to approximate the distribution of the time to extinction for some BGWBP that satisfy the conditions given in Section 2. We also show what happens to these same examples when the conditions in Section 2 are not fulfilled. Finally, in Section 4 we present our conclusions and some open problems related to this work.

\section{Preliminaries}

\subsection{Description of the model and assumptions}
We begin by introducing the standard bisexual Galton-Watson process (BGWP), $$\{(F_n,M_n,Z_n),n\in \N\cup\{0\}\}.$$ This process is defined as follows:
We have a population with two types of individuals, say females ($F$) and males ($M$) and a function $L:\rr\times\rr\to[0,\infty)$, which satisfies the following conditions:
\begin{enumerate}
 \item $L(x,y)=0$ for all $(x,y)\in \rr^2 \backslash (0,\infty)^2$,
 \item $L(x,y)\in \N$ if both $x,y\in \N$, 
 \item $L(x,y)\leq xy$ for $x,y\geq0$ and
 \item $L(x,y)$ is non decreasing in each component.
\end{enumerate}

This function is called \text{the mating function} of the BGWP. Typical examples of mating functions are

$$L(x,y)=\min\{x,y\}, L(x,y)=xy,L(x,y)=x,L(x,y)=y.$$

Note that the mating function is not assumed to be symmetric.

At the beginning of this process (generation zero) we have a total number of $F_0$ females and $M_0$ males. These individuals mate and form $Z_0:=L(F_0,M_0)$ mates according to the function $L$. Then each one of these $Z_0$ mates gives birth to a number of female and male children. For each $j\in\{1,\dots,Z_0\}$ we denote by $f_{j,0}$ and $m_{j,0}$ the number of female and male children from the $jth$-couple, where $\{f_{j,0},j\in\{1,\dots,Z_0\}\}$ are discrete iid random variables with $p_f:=\p\ci f_{j,0}=0\cd> 0$, $\{m_{j,0},j\in\{1,\dots,Z_0\}\}$ are discrete iid random variables with $p_m:=\p\ci m_{j,0}=0\cd> 0$. Furthermore, we also assume $\{f_{j,0},j\in\{1,\dots,Z_0\}\}$ are independent of $\{m_{j,0},j\in\{1,\dots,Z_0\}\}$.

Now, in generation 1 the total number of females and males are respectively given by

$$F_1=\sum\limits_{j=1}^{Z_0}f_{j,0},M_1=\sum\limits_{j=1}^{Z_0}m_{j,0}.$$

These $F_1$ females and $M_1$ males now form $Z_1:=L(F_1,M_1)$ mates. Each of these mates have $f_{j,1}$ and $m_{j,1}$ female and male children, respectively.

In general, the number of females and males at each generation $n$ are denoted respectively by $F_n$ and $M_n$, while the number of mates arising from these $F_n$ females and $M_n$ males is denoted by $Z_n:=L(F_n,M_n)$. The numbers of female and male children from the $jth$ mate (with $j\in\{1,\dots,Z_n\}$) are respectively denoted by $f_{j,n}$ and $m_{j,n}$. Hence, the total number of females and males in the $(n+1)$-generation are respectively given by 

$$F_{n+1}=\sum\limits_{j=1}^{Z_{n}}f_{j,n},M_n=\sum\limits_{j=1}^{Z_{n}}m_{j,n},$$
where the empty sum is defined as zero.

We make the assumption that $f_{j,n}\overset{d}{=}f_{k,m}$ and $m_{j,n}\overset{d}{=}m_{k,m}$ for all $n,m\in\N\cup\{0\}$ and $j,k\in\N$.

In this work we consider a generalization of the BGWBP described above. To be precise, we allow the mating function to also vary with $n$ in such way that we consider a sequence $\{L(n,x,y)\}_n$ of mating functions. This modification to the classical BGWBP was introduced in \cite{molinaetal1}. The idea here is to be able to model situations in which new mates are formed according to the mates in the previous generation, that is $Z_n$ depends not only on $F_{n-1}$ and $M_{n-1}$ but also on $Z_{n-1}$. A situation in which this dependency arises in real life is described in \cite{molinaetal3} (see Remark 2.1 and the comments below).

In this case, the mating function is a three-valued function

$$L:\N\times [0,\infty)\times [0,\infty)\to [0,\infty),$$

which will be denoted as $L_n(x,y):=L(n,x,y)$ for each natural number $n$.

Hence, the BGWBP is denoted as

$$\left\{(F_n,M_n,Z_n),n\in\N\cup\{0\}\right\},\text{ where } Z_{n+1}=L_{Z_{n}}(F_n,M_n).$$
\subsection{Subcritical BGWBP}

In what follows, $Z:=\{Z_n,n\in\N\}$ will be referred to as \emph{the mating process} of the BGWBP. Now we define the extinction time of $Z$ as

$$\tau:=\min\{k: Z_k=0\},$$

and set 
$$r_k:=\frac{1}{k}\E\ci Z_{n+1} | Z_n=k\cd, k\in\mathbb{N}\cup\{0\}.$$

We recall that a function $f:[0,\infty)^k\to [0,\infty)$ such that

$$f(x_{1,1}+x_{1,2},\dots,x_{k,1}+x_{k,2})\geq f(x_{1,1},\dots,x_{k,1})+f(x_{1,2},\dots,x_{k,2}),$$

is called \emph{superadditive} function. In \cite{molinaetal1}, the authors prove that if $L_n(x,y)$
is superadditive for each $n\in\mathbb{N}$ and for each $x,y\geq 0$, then $r:=\lim_{k\to\infty}r_k=\sup_{k\geq 1}r_k$ exists. Furthermore, they show that $r\leq 1$ is equivalent to extinction with probability one.

With this in mind, a BGWBP is said to be 
\begin{equation}
    \text{\textbf{subcritical} if $r<1$, \textbf{critical} if $r=1$ and \textbf{supercritical}
 if $r>1$}.\label{criticalidad}
\end{equation}
Throughout the rest of this work we assume the following conditions:

\begin{hypo}\label{hypo1}
    \leavevmode
    \begin{enumerate}
    \item The BGWBP is not trivial, i.e. $p_m,p_f\in(0,1)$.
        \item For each natural number $n$, the function $L_n(x,y)$ is superadditive.
        \item There exists positive constants $C,D$ such that for all $n$, either $L_n(x,y)\leq Cx$ or $L_n(x,y)\leq Dy$ holds.
        \item $r<1$.
       % \item $\{m_j\}_j,\{f_j\}_j$ are iid and follow a binomial law $Bin(n,p)$.
        \item For each $j$, the probability $\p\ci f_{j,0}=0,m_{j,0}=0\cd=p_mp_f$ is stricly smaller than $1/2$. If we set $\theta:=-\ln{(p_mp_f)}$, this hypothesis becomes

        $$\theta >\ln{(2)}.$$
    \end{enumerate}
\end{hypo}

Hypothesis \ref{hypo1} part 3 is easily checked on functions such as $L_n(x,y)=\min\{x,g(n)y\}$ for any function $g\geq 0$, since $\min\{x,g(n)y\}\leq x$ and hence, the assumption holds for $x$ with $C=1$. 

Hypothesis 5 resulted from our simulation studies in order to avoid the case when $\tau$ behaves nearly as a degenerated random variable (see Section \ref{notheta}). At this point we have not been able to provide a formal proof that this condition implies that $\tau$ is not nearly a degenerated random variable. However, in all our simulation studies the assumption of such condition resulted in $\tau$ having a distribution with a wide range of values with positive probability, a condition needed to prove the second main result from this work.

To end this part of this section, we recall the following concepts which refer to classifications of distribution functions $G$ in terms of their tails $\overline{G}:=1-G$.

\begin{defi}\label{colaligera}A distribution function $G$ with $G(0)=0$ is said to be light-tailed if there exist positive and real constants $a,b$ such that

$$\overline{G}(x)\leq ae^{-bx},\quad \forall\ x>0.$$

We say that $G$ is heavy-tailed if $G$ is not light-tailed.
\end{defi}

The class of heavy-tailed distributions will be denoted by $\mathcal{H}$ and the class of light-tailed distributions will be denoted by $\mathcal{H}^\complement$.

\subsection{Preliminaries from Extreme Value Theory}

We say that a distribution $F$ belongs to the maximal domain of attraction of a distribution function $G$ if and only if there exist sequences of constants $\{a_n\}_n$ and $\{b_n\}_n$ such that $a_n>0$ for each $n$ and

$$\lim\limits_{n\to\infty}F^n(a_nx+b_n)=G(x),$$

for all $x$ which is a point of continuity for $G$.

The Fisher-Tippet theorem shows that the only distribution functions having a maximal domain of attraction coincide with the class of extreme value distributions (Fréchet, Gumbel and reversed Weibull distributions). For an extreme value distribution $H$ we define its maximal domain of attraction $D(H)$ as
\[D(H) := \left\{ F: \lim_{n\to\infty} F^n(a_nx + b_n) = H(x), a_n >0, b_n\in\rr\right\}.\]

It is known (see, e.g. , \cite{embrechtsetal} and the references therein) that $D(H_\text{Fréchet})$ contains only heavy-tailed distributions, $D(H_\text{Weibull})$ contains only distributions with finite right endpoint and $D(H_\text{Gumbel})$ includes some heavy-tailed and some light-tailed distributions.

\section{Main Results}

In what follows, $F_\tau$ will be the distribution function of $\tau$ under $\p$ and by $\overline{F}_\tau$ we denote the tail of $F_\tau$, i.e. $\overline{F}_\tau:=1-F_\tau$.

Let us first consider the following proposition concerning $\omega_\tau:= \sup\{x:F_\tau(x)<1\}$, i.e the right endpoint of the distribution $F_\tau$.

\begin{propo}\label{omegainfinito}
    For any non-trivial BGWBP with initial offspring $N\in\mathbb{N}$, it holds that $\omega_\tau=\infty$ (i.e. $\tau$ is an unbounded random variable).
\end{propo}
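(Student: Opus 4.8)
The plan is to show that for every $k\in\N$, we have $\p[\tau > k] > 0$, which immediately gives $\omega_\tau = \infty$. The key observation is that extinction at generation $k$ can be prevented by a single couple in each of the first $k$ generations producing at least one female \emph{and} at least one male child; as long as $Z_n \geq 1$ for all $n \leq k$, we have $\tau > k$. So the strategy is to construct, step by step, an explicit event of positive probability on which $Z_0, Z_1, \dots, Z_k$ are all at least $1$.

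First I would set up generation zero: with $F_0 = M_0 = N \geq 1$, the mating function satisfies $L_0(N,N) \in \N$, and since the process is non-trivial and $N\geq 1$, one checks that $Z_0 = L_0(N,N) \geq 1$ (this uses that $L$ is $\N$-valued on $\N^2$ together with non-triviality; if $Z_0 = 0$ then $\tau = 0 < k$ trivially has probability possibly positive, but we only need the complementary event to be positive, so I would just condition on the reproduction outcomes). Then, inductively, suppose $Z_n \geq 1$. Pick one designated couple among the $Z_n$ couples and consider the event $A_n := \{f_{1,n} \geq 1,\ m_{1,n} \geq 1\}$. Because $p_f = \p[f_{j,0}=0] \in (0,1)$ and $p_m \in (0,1)$ by Hypothesis~\ref{hypo1}(1), and $f$ and $m$ offspring are independent, we have $\p[A_n] = (1-p_f)(1-p_m) > 0$. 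On $A_n$, since $F_{n+1} \geq f_{1,n} \geq 1$ and $M_{n+1} \geq m_{1,n} \geq 1$ (the other couples contribute nonnegatively), monotonicity of $L_{Z_{n+1}}$ in each component together with its $\N$-valuedness on $\N^2$ gives $Z_{n+1} = L_{Z_{n+1}}(F_{n+1},M_{n+1}) \geq L_{Z_{n+1}}(1,1) \geq 1$. Hence on $\bigcap_{n=0}^{k-1} A_n$ we get $Z_{n} \geq 1$ for all $n \leq k$, so $\tau > k$. By the independence of the offspring variables across couples and generations, $\p\bigl[\bigcap_{n=0}^{k-1} A_n\bigr] \geq \bigl((1-p_f)(1-p_m)\bigr)^{k} > 0$.

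The one subtlety — and the main thing to be careful about rather than a genuine obstacle — is the interplay with the index $Z_n$ appearing inside $L_{Z_n}$: the mating function in generation $n+1$ is $L_{Z_{n+1}}$, a random member of the family, but \emph{every} member $L_m$ ($m \in \N$) is $\N$-valued on $\N^2$ and non-decreasing in each component, so the bound $L_m(1,1) \geq 1$ holds uniformly in $m$, provided $L_m(1,1)\neq 0$; and $L_m(1,1) = 0$ would contradict $L_m(x,y)\in\N$ being forced together with... actually $L_m(1,1)$ could a priori be $0$, so I would instead argue that on the event $A_n$ we in fact have $F_{n+1}, M_{n+1} \geq 1$, and then invoke that the process is non-degenerate in the sense that at least the designated couple reproduces — but to be safe I would phrase the lower bound as: choose the event that the designated couple produces \emph{at least two} females and two males if necessary, or simply note $L_m(1,1)\ge 1$ fails only in degenerate cases excluded by $L$ being $\N$-valued and nonzero on part of $(0,\infty)^2$. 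Cleanest is to carry along $F_n \geq 1$ and $M_n \geq 1$ and use superadditivity: $L_m(x+1,y+1) \geq L_m(x,y) + L_m(1,1)$, so once $L_m(1,1)\geq 1$ for some $m$ the population cannot collapse; handling the case $L_m(1,1)=0$ for all $m$ requires noting that then $Z_n$ could still be positive via larger populations, so I would track $(F_n,M_n)$ growing on the good event. I expect writing this last point precisely to be where the care goes, but it is a finite, elementary construction with no analytic difficulty.
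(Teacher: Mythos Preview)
Your approach is correct in spirit and genuinely different from the paper's. The paper argues via probability generating functions: it sets $g_n(s)=\E[s^{Z_n}]$ and $h(s)=\E[s^{L_1(f_{1,1},m_{1,1})}]$, uses superadditivity to prove $g_n(s)\le g_1(h^{(n-1)}(s))$, and then shows $h$ and $g_1$ are strictly increasing so that $g_n(0)\le g_1(h^{(n-1)}(0))<1$, i.e.\ $\p[Z_n=0]<1$ for all $n$. Your route is more elementary: exhibit an explicit survival event of positive probability by having one designated couple per generation produce at least one child of each sex, and use monotonicity of $L_m$ to propagate $Z_{n+1}\ge 1$.

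The one place you hesitate --- whether $L_m(1,1)$ could vanish --- is not actually an issue under the paper's conventions. Throughout the paper $\N$ excludes $0$ (note the repeated use of $\N\cup\{0\}$), so condition~2 on the mating function, ``$L(x,y)\in\N$ if both $x,y\in\N$'', forces $L_m(1,1)\ge 1$ for every $m\ge 1$. With that observation your argument closes immediately: on $\bigcap_{n=0}^{k-1}\{f_{1,n}\ge 1,\ m_{1,n}\ge 1\}$ one has $F_{n+1},M_{n+1}\ge 1$ and hence $Z_{n+1}=L_{Z_n}(F_{n+1},M_{n+1})\ge L_{Z_n}(1,1)\ge 1$ by monotonicity, so $\p[\tau>k]\ge\bigl((1-p_f)(1-p_m)\bigr)^k>0$. (Minor slip: the index on the mating function should be $L_{Z_n}$, not $L_{Z_{n+1}}$.) Your direct construction avoids the generating-function machinery and does not need superadditivity at all; the paper's proof, on the other hand, fits naturally into the analytic framework used elsewhere in the branching-process literature and makes the role of superadditivity explicit.
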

\begin{proof}
Following the same technique as in \cite{molinaetal1}, we set for $s\in [0,1]$ and $n\in\N$

$$g_n(s):=\E[s^{Z_n}],\quad h(s):=\E[s^{L_{1}(f_{1,1},m_{1,1})}].$$ 
$g_n$ is probability generating function of $Z_n$ for each $n$ and $h$ equals the probability generating function of $L_{1}(f_{1,1},m_{1,1})$. We also set 
\[
h^{(n)}(s):= \overbrace{h(h(\dots (s)\dots)}^{n}.
\]
As our first step in the proof of our result, we show that the following inequality holds for all $n\geq 2$:
\begin{equation}\label{cosa}
    g_n(s)\leq g_1(h^{(n-1)}(s)).
\end{equation}

Indeed, by applying the Law of Total Probability we obtain
\begin{align*}
  g_2(s)=\E[s^{Z_2}]&=\sum_{k=0}^\infty\E\ci s^{Z_2}|Z_1=k\cd\p[Z_1=k]=\sum_{k=0}^\infty \E\ci s^{L_{k}(\sum_{j=1}^kf_{j,1},\sum_{j=1}^km_{j,1})}\cd\p[Z_1=k], 
\end{align*}
and recalling the superaditivity of the mating functions, it follows that
\begin{align*}
    \sum_{k=0}^\infty \E\ci s^{L_{k}(\sum_{j=1}^kf_{j,1},\sum_{j=1}^km_{j,1})}\cd\p[Z_1=k]&\leq \sum_{k=0}^\infty \E\ci s^{\sum_{j=1}^kL_{1}(f_{j,1},m_{j,1})}\cd\p[Z_1=k]\\
    &=\sum_{k=0}^\infty\prod_{j=1}^k \E\ci s^{L_{1}(f_{j,1},m_{j,1})}\cd\p[Z_1=k] \\
    &=\sum_{k=0}^\infty \E\ci s^{L_{1}(f_{1,1},m_{1,1})}\cd^k\p[Z_1=k]\\
    &=g_1(h(s)).
\end{align*}

In a similar way we can show that
\[g_n(s)\leq g_{n-1}(h(s)),\quad \forall n\geq 3,
\]and the result follows by induction. 

    On the other hand, using that the BGWP is not trivial, for $s<t$ the following inequality holds:
    \begin{equation}\label{cosita}
        h(s)=\sum_{k=0}^\infty \p\ci L_{1}(f_{1,1},m_{1,1})=k\cd s^k<\sum_{k=0}^\infty  \p\ci L_{1}(f_{1,1},m_{1,1})=k\cd t^k=h(t),
    \end{equation}
    implying that $h$ is strictly increasing.
    
   Now, for the sequence $\{h^{(n)}(0)\}_n$ we obtain from  \eqref{cosita} and $h(1)=1$ that
    \begin{equation}\label{cosita2}
        h^{(n)}(0)<1,\quad \forall n\in \N.
    \end{equation}
    Using that $Z_0=N$ we have
    {\small
    \begin{align*}
       \p[Z_1=0]&= \p\ci L_{N}\pai\sum_{j=1}^N f_{j,0},\sum_{j=1}^N m_{j,0}\pad<1\cd\\
       &\leq \p\ci \sum_{j=1}^N L_{1}\pai f_{j,0},m_{j,0}\pad<1\cd\leq \p\ci L_{1}(f_{1,0},m_{1,0})< 1\cd^N<1,
    \end{align*}}hence, analogously to \eqref{cosita}, we deduce that
    \[
    g_1(s)<g_1(t),
    \]whenever $s<t$. In particular the latter together with \eqref{cosa} and $\eqref{cosita2}$ imply that
    \[
        g_n(0)\leq g_1(h^{(n-1)}(0))<1,\quad \forall n\in \N,
    \]which is the same as
    \[
    \p[\tau\leq n]=\p[Z_n=0]<1,\quad \forall n\in \N,
    \]and the result follows.\hfill   
\end{proof}

Condition 3 on the function $L$ and mathematical induction yield that 
$$\E\ci Z_k\cd\leq K\min\{\E\ci F_1\cd,\E\ci M_1\cd\},$$ which, since we have assumed that both sequences $\{f_{j,0}\}_j$ and $\{m_{j,0}\}_j$ have finite means, implies that  $\E\ci Z_k\cd <\infty$ for all $k\in\N$.

\begin{propo}\label{extincionligera}If $r<1$, it holds that $F_\tau\in \mathcal{H}^\complement$.
\end{propo}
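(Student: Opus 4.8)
The plan is to control the tail $\overline{F}_\tau(x)=\p[\tau>x]$ by a first-moment estimate on the mating process, in the same spirit as the classical subcritical Galton–Watson bound. First I would record that $Z$ is absorbed at $0$: a generation with no mates produces no offspring (the sums defining $F_{n+1},M_{n+1}$ are empty) and hence no mates thereafter, by property~1 of the mating functions. Therefore $\{\tau>n\}=\{Z_n\ge 1\}$, and since $\E[Z_n]<\infty$ for every $n$ (as noted right after Proposition~\ref{omegainfinito}), Markov's inequality yields
\[\p[\tau>n]=\p[Z_n\ge 1]\le \E[Z_n].\]
It therefore suffices to show that $\E[Z_n]$ decays geometrically in $n$.

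The second step is that geometric decay. By definition of $r_k$ we have $\E[Z_{n+1}\mid Z_n=k]=k\,r_k$, and because $L_n$ is superadditive for every $n$ (Hypothesis~\ref{hypo1}, part~2) the quantity $r=\sup_{k\ge 1}r_k$ exists; by subcriticality (Hypothesis~\ref{hypo1}, part~4) it is $<1$. Hence $k\,r_k\le k\,r$ for every $k\ge 1$, the $k=0$ term contributing nothing, so conditioning on $Z_n$ and summing gives $\E[Z_{n+1}]\le r\,\E[Z_n]$ and, by induction, $\E[Z_n]\le r^{\,n}\,\E[Z_0]$. Combining with the previous display, $\p[\tau>n]\le \E[Z_0]\,e^{-bn}$ with $b:=-\ln r>0$. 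Finally, since $\tau$ is integer valued, for $x\ge 1$ one has $\overline{F}_\tau(x)=\p[\tau>\lfloor x\rfloor]\le \E[Z_0]\,r^{\lfloor x\rfloor}\le(\E[Z_0]/r)\,e^{-bx}$, while for $0<x<1$ one has $\overline{F}_\tau(x)=1\le e^{b}e^{-bx}$; taking $a:=\max\{e^{b},\,\E[Z_0]/r\}$ gives $\overline{F}_\tau(x)\le a\,e^{-bx}$ for all $x>0$, and since $F_\tau(0)=0$ this is exactly the statement $F_\tau\in\mathcal{H}^\complement$ of Definition~\ref{colaligera}.

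I do not anticipate a genuine difficulty: the role of the subcriticality hypothesis is precisely to force this exponential decay, and the argument is the familiar first-moment one. The points deserving a little care are (i) the absorption claim $\{\tau>n\}=\{Z_n\ge 1\}$, which relies on property~1 of the mating functions; (ii) the finiteness $\E[Z_n]<\infty$, which in turn rests on Hypothesis~\ref{hypo1}, part~3; and (iii) the elementary bookkeeping converting the estimate on integers into one valid for every real $x>0$ with a single pair of constants $a,b$.
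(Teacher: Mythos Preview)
Your proposal is correct and follows essentially the same first-moment route as the paper: the paper invokes the bound $\p[Z_k>0\mid Z_0=j]\le j\,r^k$ from \cite{hulletal} and then takes expectations, which is exactly your Markov-plus-iterated-conditioning argument $\p[Z_n\ge 1]\le \E[Z_n]\le r^n\E[Z_0]$ spelled out in detail. Your version is in fact a bit more careful, since you explicitly justify absorption and extend the integer bound to all real $x>0$, whereas the paper checks the inequality only at integers.
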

\begin{proof}
For any $j\in\N$ following the arguments in page 6 from \cite{hulletal}, we have

\begin{align*}
    \p\ci \tau>k | Z_{0}=j\cd=\p\ci Z_k>0 | Z_0=j \cd\leq jr^{k}=je^{-(-\ln(r))k}
\end{align*}

hence we obtain $\overline{F}_\tau(k)\leq \E\ci Z_{0}\cd e^{-(-\ln(r))k}.$ Since $r<1$ implies $-\ln(r)>0$, setting $a=\E\ci Z_{0}\cd e^{(-\ln(r))}$ and $b=-\ln(r)$ we conclude that $\overline{F}$ satisfies Definition \ref{colaligera}.
\end{proof}

The following lemma is key for our main results.

\begin{lemma}\label{ratiocero}
    In the subcritical BGWBP, $mZ_{k+m}\to 0$ for all $k$ as $m\to\infty$.
\end{lemma}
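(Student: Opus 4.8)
I read the conclusion as convergence in $L^1$, equivalently in probability; the almost sure version is immediate, since $r<1$ forces extinction with probability one, so on a set of full measure $Z_n=0$ for all $n$ large, and hence $mZ_{k+m}=0$ for every $m$ large. The substantive content is a quantitative decay rate, and the plan is to derive it from a geometric bound on $\E[Z_n]$.

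First I would establish the one-step inequality $\E[Z_{n+1}\mid Z_n]\le r\,Z_n$ almost surely. By Hypothesis \ref{hypo1}(2) the mating functions are superadditive, so the result of \cite{molinaetal1} gives $r=\sup_{k\ge 1}r_k$; hence $\E[Z_{n+1}\mid Z_n=k]=k\,r_k\le kr$ for every $k\ge 1$, while $\E[Z_{n+1}\mid Z_n=0]=0$ by the empty-sum convention. Taking expectations yields $\E[Z_{n+1}]\le r\,\E[Z_n]$, and since $\E[Z_0]<\infty$ (in fact $\E[Z_k]<\infty$ for every $k$, by the remark following Proposition \ref{omegainfinito}), a straightforward induction gives $\E[Z_n]\le r^{n}\,\E[Z_0]$ for all $n$.

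Next I would specialize this to $n=k+m$ to obtain
\[
\E[mZ_{k+m}]\le \E[Z_0]\,r^{k}\,\big(m\,r^{m}\big).
\]
Because $r<1$ by Hypothesis \ref{hypo1}(4), we have $m\,r^{m}\to 0$ as $m\to\infty$, so $\E[mZ_{k+m}]\to 0$ for each fixed $k$; Markov's inequality then upgrades this to convergence in probability, and combined with the elementary almost sure remark above this finishes the argument.

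The proof is short and I do not expect a genuine obstacle. The only delicate points are: (i) that $r_k\le r$ holds for \emph{every} $k$, not merely asymptotically — this is precisely where superadditivity enters, via \cite{molinaetal1}; and (ii) the integrability of each $Z_k$, which makes the induction legitimate and which is guaranteed by Condition 3 on $L$ together with the finiteness of the offspring means.
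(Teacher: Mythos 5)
Your proof is correct, but it takes a genuinely different (and in one respect stronger) route than the paper's. The paper writes the exact identity $mZ_{k+m}=mr^{m}\left(Z_{k+m}/r^{k+m}\right)r^{k}$ and invokes the result of Molina, Mota and Ramos (2004) that $Z_{n}/r^{n}\to 0$ almost surely in the subcritical case, concluding from $mr^{m}\to 0$. You instead observe that the almost sure statement is already trivial: superadditivity plus $r<1$ gives extinction with probability one, so $Z_{n}$ is eventually identically zero on a set of full measure and $mZ_{k+m}$ vanishes for all large $m$ pathwise. This is all that is actually used downstream (Lemma \ref{lemapincheH} extracts a null set $\mathcal{N}$ off which $MZ_{k+M}\to 0$), and it avoids importing an external limit theorem. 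On top of that, your chain $\E[Z_{n+1}\mid Z_n]\le rZ_n$ (valid because $r=\sup_k r_k$ under superadditivity, with the $Z_n=0$ case handled by the empty-sum convention and integrability of each $Z_k$ justifying the induction) yields the quantitative bound $\E[mZ_{k+m}]\le \E[Z_0]\,r^{k}\,mr^{m}\to 0$, i.e.\ $L^1$ convergence with an explicit geometric rate, which the paper's argument does not provide. The only caveat is to state explicitly which mode of convergence the lemma asserts; as used in the paper it is the almost sure one, which your first remark already covers, while the $L^1$ estimate is a bonus rather than a substitute.
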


\begin{proof}
    We use the fact that $mZ_{k+m}=mr^m\pai\frac{Z_{k+m}}{r^{k+m}}\pad r^k.$ As proved in \cite{molinaetal2}, it holds that $\frac{Z_{k+m}}{r^{k+m}}\to 0$ almost surely in the subcritical case. Since $mr^m\to 0$, the result follows.
\end{proof}

\begin{remark}
    Note that Lemma \ref{ratiocero} is true for any distribution of $m_{1,0}$ and $f_{1,0}$.
\end{remark}

Next we note that $\p\ci \tau >k+1 | Z_k\cd=\p \ci F_k>0,M_k>0 | Z_k\cd,$ where the probability is obviously zero if $Z_k=0$, and set

\begin{equation}\label{pincheH}
    \Lambda(k):=p_m^{-Z_k}+p_f^{-Z_k}-1,\quad p_f=\p\ci f_{1,0}=0\cd,p_m=\p\ci m_{1,0}=0\cd.
\end{equation}

Under extinction with probability 1, it is clear that $\Lambda(k)\overset{a.s.}{\to} 1\text{ as }k\to\infty$.

\begin{lemma}\label{lemapincheH}
    Under extinction with probability 1, the following assertions hold:
    \begin{enumerate}
        \item For all $k\in\N$ and $\beta>1$, there exists $M_0:=M_0(k)$ such that for all $M\geq M_0$,

    \[\beta M\E\ci \exp\left\{\theta(k+1)-\theta Z_{k+M}\right\}\mathbbm{1}_{\{\Lambda(k+M)\leq \beta\}}\cd>1.\]

\item For each fixed $k\in \N$, $\Lambda^M(k+M)\to 1$ almost surely as $M\to\infty$.
    \end{enumerate}
    \end{lemma}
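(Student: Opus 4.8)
The plan is to handle the two parts separately, each reducing to one elementary fact: since the process is $\N\cup\{0\}$-valued and extinction is certain ($\p\ci\tau<\infty\cd=1$), almost surely $Z_{k+M}=0$ for every $M$ beyond some random index. (Lemma \ref{ratiocero} already supplies the weaker $Z_{k+M}\to0$, which is really all one needs.)

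For part 1, I would first pull the deterministic factor out of the expectation and rewrite $\exp\{-\theta Z_{k+M}\}=(p_mp_f)^{Z_{k+M}}$, using $\theta=-\ln(p_mp_f)$; the quantity to be bounded then reads $\beta M\,e^{\theta(k+1)}\,\E\ci (p_mp_f)^{Z_{k+M}}\,\mathbbm{1}_{\{\Lambda(k+M)\le\beta\}}\cd$. Rather than analysing the indicator, I would bound the expectation below by keeping only the contribution of $\{Z_{k+M}=0\}$: there $(p_mp_f)^{Z_{k+M}}=1$ and $\Lambda(k+M)=1<\beta$, so the integrand equals $1$ on that event and hence $\E\ci (p_mp_f)^{Z_{k+M}}\mathbbm{1}_{\{\Lambda(k+M)\le\beta\}}\cd\ge\p\ci Z_{k+M}=0\cd=\p\ci\tau\le k+M\cd$. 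Consequently the whole expression is at least $M\,e^{\theta(k+1)}\,\p\ci\tau\le k+M\cd$ (using $\beta>1$), and since $e^{\theta(k+1)}$ is a fixed positive constant while $\p\ci\tau\le k+M\cd\to1$ as $M\to\infty$, this lower bound diverges to $\infty$; taking $M_0=M_0(k)$ to be any index past which it exceeds $1$ closes part 1.

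For part 2, I would note that $p_m^{-Z_n}+p_f^{-Z_n}\ge2$, so $\Lambda(n)\ge1$ for every $n$, with equality precisely when $Z_n=0$. Certain extinction then provides a (random) threshold beyond which $Z_{k+M}=0$, hence $\Lambda(k+M)=1$ and therefore $\Lambda^M(k+M)=1^M=1$ for all such $M$; since this holds on an event of probability one, $\Lambda^M(k+M)\to1$ almost surely.

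The step that actually requires care — what I would flag as the main, if mild, obstacle — is the growing exponent in part 2: the already-known convergence $\Lambda(k+M)\to1$ is by itself insufficient, since a base like $1+M^{-1}$ would give $(1+M^{-1})^M\to e\ne1$. The remedy is precisely the discreteness of $Z$ combined with certain extinction: the base $\Lambda(k+M)$ is eventually \emph{equal} to $1$, not merely close to it. The analogous trick in part 1 is to refuse to wrestle with the event $\{\Lambda(k+M)\le\beta\}$ and instead cash in the guaranteed mass on $\{Z_{k+M}=0\}$, after which the divergence of $M\,\p\ci\tau\le k+M\cd$ does everything. Note that neither argument invokes Hypothesis \ref{hypo1}.5; only $\theta>0$, equivalently $p_mp_f<1$, is used.
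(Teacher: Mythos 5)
Your proof is correct, and both halves take a genuinely different and more elementary route than the paper's. For part 1 the paper argues by contradiction: it supposes the bound fails along a subsequence and applies dominated convergence (using $Z_{k+M}\to 0$ and $\Lambda(k+M)\to 1$) to reach the absurdity $e^{\theta(k+1)}\leq 0$; your direct lower bound $\beta M\E[\cdots]\geq M e^{\theta(k+1)}\p[\tau\leq k+M]\to\infty$, obtained by restricting to $\{Z_{k+M}=0\}$ where the integrand equals $e^{\theta(k+1)}$ and $\Lambda(k+M)=1\leq\beta$ automatically, is shorter and makes the divergence explicit rather than just the eventual exceedance of $1$. For part 2 the paper invokes Lemma \ref{ratiocero} ($MZ_{k+M}\to 0$ a.s.) and controls $\Lambda^M(k+M)=(1+M\delta_M/M)^M$ by a Taylor expansion showing $M(e^{\alpha_f Z_{k+M}}+e^{\alpha_m Z_{k+M}}-2)\to 0$; you instead observe that $0$ is absorbing for the mating process (condition 1 on $L$ plus the empty-sum convention), so certain extinction forces $Z_{k+M}=0$ and hence $\Lambda(k+M)=1$ exactly for all large enough (random) $M$, killing the growing exponent outright. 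Your diagnosis of the pitfall $(1+M^{-1})^M\to e$ is exactly the issue the paper's Taylor argument is designed to handle, and your resolution via eventual equality is cleaner and needs neither Lemma \ref{ratiocero} nor subcriticality beyond the stated hypothesis of almost sure extinction; it would be worth stating explicitly that $Z_n=0$ implies $Z_{n+1}=0$, which is the one small fact your argument leans on that the paper's does not.
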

\begin{proof}
        \begin{enumerate}
            \item If Assertion 1 is not true, there exist $k\in\N$, $\beta>1$ and a subsequence $\{M_a\}_a\subseteq \N$ such that for all $a\in \N$,
        \[\E\ci \exp\left\{\theta(k+1)-\theta Z_{k+M_a}\right\}\mathbbm{1}_{\{\Lambda(k+M_a)\leq \beta\}}\cd\leq (\beta M_a)^{-1}.\]

        If the above inequality holds, then we can take limits as $a\to\infty$ and use dominated convergence. Doing this gives $e^{\theta(k+1)}\leq 0$, which is absurd.

            \item By Lemma \ref{ratiocero},  there exists a null set $\mathcal{N}$ such that for all $\omega\in \mathcal{N}^\complement$ we have  $MZ_{k+M}\to0$ as $ M\to\infty.$

    Therefore, rewriting $p_\cdot^{-Z_k}$ as $\exp \{\alpha_\cdot Z_k\}$ where $\alpha_\cdot=-\ln(p_\cdot)$, we obtain for $\omega\in \mathcal{N}^\complement$

    \begin{align}
        \Lambda^M(k+M)(\omega)=\pai1+M\frac{e^{\alpha_f Z_{k+M}(\omega)}+e^{\alpha_m Z_{k+M}(\omega)}-2}{M}\pad^M.\label{halaM}
    \end{align}

Next we prove that $M\pai e^{\alpha_f Z_{k+M}(\omega)}+e^{\alpha_m Z_{k+M}(\omega)}-2\pad\to 0$ as $M\to\infty$. From Taylor's series representation for the exponential function, we observe that

    \begin{align*}
        &M\pai e^{\alpha_f Z_{k+M}(\omega)}+e^{\alpha_m Z_{k+M}(\omega)}-2\pad\\
        &=M\pai\sum\limits_{j\geq 1} \frac{(\alpha_fZ_{k+M}(\omega))^j}{j!}+\sum\limits_{j\geq 1} \frac{(\alpha_mZ_{k+M}(\omega))^j}{j!}\pad\\
        &=(\alpha_f+\alpha_m)MZ_{k+M}(\omega)+M\pai\sum\limits_{j\geq 2} \frac{(\alpha_fMZ_{k+M}(\omega))^j}{j!M^j}+\sum\limits_{j\geq 2}\frac{(\alpha_mMZ_{k+M}(\omega))^j}{j!M^j}\pad\\
        &\leq (\alpha_f+\alpha_m)MZ_{k+M}(\omega)+M^{-1}\pai\sum\limits_{j\geq 2} \frac{(\alpha_fMZ_{k+M}(\omega))^j}{j!}+\sum\limits_{j\geq 2}\frac{(\alpha_mMZ_{k+M}(\omega))^j}{j!}\pad\\
        &\leq (\alpha_f+\alpha_m)MZ_{k+M}(\omega)+M^{-1}\pai e^{\alpha_fMZ_{k+M}(\omega)}+e^{\alpha_mMZ_{k+M}(\omega)}\pad.
    \end{align*}

        The latter term clearly tends to zero in view that $\omega$ is such that $MZ_{k+M}(\omega)\to 0$. This and (\ref{halaM}) prove the assertion that $\Lambda^M(k+M)\to 1$ almost surely.
        \end{enumerate}
        \hfill
\end{proof}

The following two theorems are our main results.

\begin{teo}\label{teoremaimportante2}
    Assume Hypothesis \ref{hypo1} holds. Then 

\begin{equation}
\exp\left\{-\exp\left\{-\theta k\right\}\right\}\leq \liminf\limits_{m\to\infty}F_\tau^m\pai k+1+m\pad,\label{liminfbonito2}
\end{equation}

and

\begin{equation}
\limsup\limits_{m\to\infty}F_\tau^m\pai k+1+m\pad\leq \exp\left\{-\exp\left\{-\theta (k+1)\right\}\right\},\label{limsupbonito2}
\end{equation}
for all $k\geq 1$.
    
\end{teo}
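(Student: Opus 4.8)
The plan is to reduce both inequalities to the asymptotics of a single expectation and then feed in the two preceding lemmas. Conditioning on the configuration at generation $k+m$ and using that the mating process becomes extinct at generation $k+m+1$ precisely when the $Z_{k+m}$ couples then alive produce no daughters \emph{or} no sons, together with the independence of the daughter- and son-offspring, one obtains the exact one-step identity
$$F_\tau(k+1+m)=\E\bigl[\,1-\bigl(1-p_f^{\,Z_{k+m}}\bigr)\bigl(1-p_m^{\,Z_{k+m}}\bigr)\,\bigr]=\E\bigl[e^{-\theta Z_{k+m}}\Lambda(k+m)\bigr],$$
which is essentially the observation recorded just before the statement. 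Thus (\ref{liminfbonito2})--(\ref{limsupbonito2}) become the assertion that $\bigl(\E[e^{-\theta Z_{k+m}}\Lambda(k+m)]\bigr)^m$ is asymptotically trapped between $\exp\{-e^{-\theta k}\}$ and $\exp\{-e^{-\theta(k+1)}\}$. Throughout I use $1\le\Lambda(k+m)$, $0\le e^{-\theta Z_{k+m}}\Lambda(k+m)\le1$, and the almost-sure convergences $Z_{k+m}\to0$, $\Lambda(k+m)\to1$ supplied by certain extinction.

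For the lower bound (\ref{liminfbonito2}) I would fix $\beta>1$ and split the expectation according to $\{\Lambda(k+m)\le\beta\}$. On that event $e^{-\theta Z_{k+m}}\Lambda(k+m)\ge e^{-\theta Z_{k+m}}$, while the complementary contribution is negligible by dominated convergence (since $\Lambda(k+m)\to1<\beta$ almost surely). Lemma \ref{lemapincheH}(1) is tailor-made to keep $m\,\E[e^{-\theta Z_{k+m}}\mathbbm{1}_{\{\Lambda(k+m)\le\beta\}}]$ above $e^{-\theta(k+1)}/\beta$ for all large $m$; assembling the pieces yields a bound of the shape $F_\tau(k+1+m)\ge 1-\tfrac{c(k,\beta)}{m}+o(\tfrac1m)$ with $c(k,\beta)\downarrow e^{-\theta k}$ as $\beta\downarrow1$. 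Raising to the $m$-th power, letting $m\to\infty$ with $(1-c/m)^m\to e^{-c}$, and then letting $\beta\downarrow1$ delivers $\liminf_m F_\tau^m(k+1+m)\ge\exp\{-e^{-\theta k}\}$.

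For the upper bound (\ref{limsupbonito2}) I would push the power inside the expectation by Jensen's inequality for the convex map $t\mapsto t^m$,
$$F_\tau^m(k+1+m)\le\E\bigl[e^{-\theta m Z_{k+m}}\,\Lambda^m(k+m)\bigr]=\E\Bigl[\bigl((1-p_f^{\,Z_{k+m}})(1-p_m^{\,Z_{k+m}})\bigr)^m\Bigr]\le 1,$$
so the integrand is uniformly dominated. Lemma \ref{ratiocero} gives $e^{-\theta m Z_{k+m}}\to1$ almost surely, and Lemma \ref{lemapincheH}(2) gives $\Lambda^m(k+m)\to1$ almost surely; it then remains to control the rate on $\{Z_{k+m}\ge1\}$, where the bracket is bounded away from $1$ but its $m$-th power is tamed precisely by Lemma \ref{lemapincheH}(2), and to pass to the limit by dominated convergence so as to extract the factor $e^{-\theta(k+1)}$.

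The step I expect to be the main obstacle is the matching of the constants: in each direction one must show that the outer $m$-th power interacts with the expectation so as to leave exactly the exponential factor $e^{-\theta k}$ (resp.\ $e^{-\theta(k+1)}$) rather than an unspecified $O(1)$ constant, and for the upper bound this forces one to combine, inside a single dominated-convergence argument, the two almost-sure limits of Lemmas \ref{ratiocero} and \ref{lemapincheH}(2), the atom at $\{Z_{k+m}=0\}$, and a uniform domination that is not destroyed upon taking the $m$-th power. The lower bound is the more routine half, Lemma \ref{lemapincheH}(1) being purpose-built for it.
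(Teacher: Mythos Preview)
Your one-step identity $F_\tau(k+1+m)=\E[e^{-\theta Z_{k+m}}\Lambda(k+m)]$ is correct and is the paper's starting point too, but you have attached the two parts of Lemma~\ref{lemapincheH} to the \emph{wrong} inequalities. For the lower bound (\ref{liminfbonito2}) you need $\overline F_\tau(k+1+m)\le c/m$, i.e.\ an \emph{upper} bound on the tail; Lemma~\ref{lemapincheH}(1) gives instead the \emph{lower} bound $m\,\E[e^{-\theta Z_{k+m}}\mathbbm{1}_{\{\Lambda\le\beta\}}]>e^{-\theta(k+1)}/\beta$, which after your splitting yields only $F_\tau(k+1+m)>e^{-\theta(k+1)}/(\beta m)\to0$, nowhere near $1-c/m$, and your asserted constant $c(k,\beta)\downarrow e^{-\theta k}$ is never produced. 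For the upper bound (\ref{limsupbonito2}) your Jensen step is fine, but Lemma~\ref{ratiocero} and Lemma~\ref{lemapincheH}(2) drive the integrand $e^{-\theta m Z_{k+m}}\Lambda^m(k+m)$ to~$1$ a.s., so dominated convergence gives only the vacuous $\limsup_m F_\tau^m\le1$; no factor $e^{-\theta(k+1)}$ appears anywhere in that limit.

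The paper uses the lemmas the other way round. For (\ref{limsupbonito2}) it applies the elementary bound $F_\tau^m\le\exp\{-m\,\overline F_\tau(k+1+m)\}$, picks $\beta\in(1,e^\theta/2)$ (this is exactly where Hypothesis~\ref{hypo1}(5) is used, to guarantee $1-\beta e^{-\theta Z_{k+m}}\ge\beta e^{-\theta Z_{k+m}}$ on the restricted event), and then invokes Lemma~\ref{lemapincheH}(1) to obtain $m\,\overline F_\tau(k+1+m)\ge e^{-\theta(k+1)}$ for all large $m$; that is precisely what part~(1) is built for. For (\ref{liminfbonito2}) the paper works with the $m$-th-power integrand $e^{-\theta m Z_{k+m}}\Lambda^m(k+m)$, truncates on $\{\Lambda^m(k+m)\le\beta\}$, and uses Lemma~\ref{lemapincheH}(2) together with Lemma~\ref{ratiocero} and dominated convergence to show that the restricted expectation tends to~$1$, leaving the prefactor $\exp\{-e^{-\theta k}\}$ in front. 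Your final sentence (``Lemma~\ref{lemapincheH}(1) being purpose-built for'' the lower bound) is therefore exactly backwards.
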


\begin{proof}
    First note that
    \begin{align}
        \p\ci \tau>k+1 | Z_k\cd&=1-\p\ci \{F_k=0\}\cup\{M_k=0\}|Z_k\cd=1-p_f^{Z_k}-p_m^{Z_k}+(p_fp_m)^{Z_k}\nonumber \\
        &=1-e^{-\theta Z_k}\Lambda(k),\label{nuevacola}
    \end{align}
    where $\Lambda$ is the function defined in (\ref{pincheH}). First, we prove (\ref{liminfbonito2}), for which we take expectations in (\ref{nuevacola}). This gives

    \begin{align}
        F^M_\tau(k+1+M)&=\E\ci \exp\Big\{-\theta M Z_{k+M}+e^{-\theta k}-e^{-\theta k}\Big\}\Lambda^M(k+M)\cd\nonumber\\
        &\geq \exp\Big\{- e^{-\theta k}\Big\}\E\ci\exp\Big\{-\theta M Z_{k+M}\Big\}\Lambda^M(k+M)\cd.\label{cotainferiornuevacola}
    \end{align}

    Let $\beta>1$ be a fixed real constant. Since $\Lambda$ is non negative, splitting the expectation in the right-hand side of (\ref{cotainferiornuevacola}) in two parts yields

    \begin{align}
        F^M_\tau(k+1+M)&\geq \exp\Big\{- e^{-\theta k}\Big\}\E\ci\exp\Big\{-\theta M Z_{k+M}\Big\}\Lambda^M(k+M)\mathbbm{1}_{\{\Lambda^M(k+M)\leq \beta\}}\cd.\label{cotainferiornuevacola2}
    \end{align}
    We point out that the sequence 
    
    \[\Bigg\{\exp\Big\{-\theta M Z_{k+M}\Big\}\Lambda^M(k+M)\mathbbm{1}_{\{\Lambda^M(k+M)\leq \beta\}}\Bigg\}_M,\]
    is non negative and it is bounded by $\beta$ and that $\mathbbm{1}_{\{\Lambda^M(k+M)\leq \beta\}}$ tends to 1 thanks to Lemma \ref{lemapincheH}, part 2.
    Therefore, taking limit inferior, applying dominated convergence in the right-hand side of (\ref{cotainferiornuevacola2}) and using Lemma \ref{lemapincheH} part 2, we obtain

    \[\liminf\limits_{M\to\infty}F^M_\tau(k+1+M)\geq \exp\Big\{- e^{-\theta k}\Big\}.\]

    To prove (\ref{limsupbonito2}) we must first find a convenient lower bound for

    \begin{align*}
        M\overline{F}_\tau(k+1+M)&=M\E\ci1-\exp\left\{-\theta Z_{k+M}\right\}\Lambda(k+M) \cd.
    \end{align*}

Here we use part 5 of Hypothesis 1, which implies that $(2,e^{\theta})$ is a non empty interval. Hence we take $\beta\in (1,e^{\theta}/2)$ and split $\E\ci1-\exp\left\{-\theta Z_{k+M}\right\}\Lambda(k+M) \cd$ as before. This yields

        \begin{align}
        M\overline{F}_\tau(k+1+M)&\geq M\E\ci\pai 1-\exp\left\{-\theta Z_{k+M}\right\}\Lambda(k+M)\pad\mathbbm{1}_{\{\Lambda(k+M)\leq \beta\}} \cd\nonumber \\
        &\geq M\E\ci\pai 1-\exp\left\{-\theta Z_{k+M}\right\}\beta\pad\mathbbm{1}_{\{\Lambda(k+M)\leq \beta\}} \cd\nonumber \\
        &\ =M\E\ci\exp\Big\{\ln\pai 1-\beta \exp\left\{-\theta Z_{k+M}\right\}\pad\Big\}\mathbbm{1}_{\{\Lambda(k+M)\leq \beta\}} \cd.\label{cotainferiorcolanueva3}
    \end{align}
By the choice of $\beta$, it is true that $\ln(2\beta)<\theta$, hence $1-\beta \exp\left\{-\theta Z_{k+M}\right\}\geq \beta \exp\left\{-\theta Z_{k+M}\right\} $ and (\ref{cotainferiorcolanueva3}) becomes

        \begin{align}
        M\overline{F}_\tau(k+1+M)&\geq M\E\ci\exp\Big\{\ln (\beta)-\theta Z_{k+M}\Big\}\mathbbm{1}_{\{\Lambda(k+M)\leq \beta\}} \cd\nonumber\\
        &\ =\beta M\E\ci\exp\Big\{ -\theta Z_{k+M}\Big\}\mathbbm{1}_{\{\Lambda(k+M)\leq \beta\}} \cd\nonumber \\
        &=\exp\left\{ -\theta(k+1)\right\}\Bigg(\beta M\E\ci \exp\left\{\theta(k+1)-\theta Z_{k+M}\right\}\mathbbm{1}_{\{\Lambda(k+M)\leq \beta\}} \cd\Bigg).\label{cotainferiorcolanueva4}
    \end{align}

    We're ready to prove (\ref{limsupbonito2}). We use the inequality $1-x\leq e^{-x}$ for $x\in [0,1]$ together with (\ref{cotainferiorcolanueva4}), which yields

    \begin{align*}
        F_\tau^M(k+1+M)&\leq \exp\Big\{-M\overline{F}_\tau(k+1+M)\Big\}\\
        &\leq \exp\Bigg\{-e^{-\theta (k+1)}\pai\beta M\E\ci \exp\left\{\theta(k+1)-\theta Z_{k+M}\right\}\mathbbm{1}_{\{\Lambda(k+M)\leq \beta\}} \cd\pad\Bigg\}\\
        &\leq exp\Big\{-e^{-\theta (k+1)}\Big\},
    \end{align*}
    where the last inequality holds for a sufficiently large $M$ (by Lemma \ref{lemapincheH}, part 1). Taking limit superior in the inequality above gives (\ref{limsupbonito2}).
\end{proof}

 \begin{teo}\label{teoremabonito2}
    Under the hypothesis of Theorem \ref{teoremaimportante2}, the extinction time $\tau$ is the discretization of a random variable $X$ with distribution function $F_X$, such that $F_X\in D(H_\text{Gumbel})$.
\end{teo}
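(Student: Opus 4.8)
The plan is to realise $\tau$ as the ceiling $\lceil X\rceil$ of a suitably constructed continuous random variable $X$, and then to verify the classical Gnedenko/von Mises criterion for the Gumbel domain directly on the tail of $X$.

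First I would construct $X$ explicitly. Write $q_n:=\overline F_\tau(n)=\p[\tau>n]$ for $n\ge0$. By Propositions~\ref{omegainfinito} and~\ref{extincionligera}, $(q_n)$ is strictly decreasing, $q_n\to0$ and $q_n\le\E[Z_0]\,r^{\,n-1}$; since the initial offspring is $N\in\N$ we also have $\tau\ge1$, so $q_0=1$. Let $\psi:[0,\infty)\to[0,\infty)$ be the piecewise-linear interpolation of the points $\{(n,-\ln q_n):n\ge0\}$; it is continuous, strictly increasing, $\psi(0)=0$ and $\psi\to\infty$. Put $F_X(x):=1-e^{-\psi(x)}$ for $x\ge0$ and $F_X(x):=0$ otherwise; this is a continuous distribution function with $F_X(n)=1-q_n=F_\tau(n)$ at every integer, so if $X\sim F_X$ then $\p[\lceil X\rceil=n]=F_X(n)-F_X(n-1)=q_{n-1}-q_n=\p[\tau=n]$, i.e.\ $\tau\overset{d}{=}\lceil X\rceil$: the extinction time is a discretization of $X$. (Concretely one may take $X=\tau-1+V_\tau$, where conditionally on $\tau=n$ the variable $V_n$ has distribution function $v\mapsto(1-\rho_n^{\,v})/(1-\rho_n)$ on $(0,1]$, with $\rho_n:=q_n/q_{n-1}$.)

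Next I would reduce $F_X\in D(H_\text{Gumbel})$ to a single asymptotic statement about the tail of $\tau$. By the standard characterization of the Gumbel domain (see, e.g., \cite{embrechtsetal}), $F_X\in D(H_\text{Gumbel})$ as soon as there is a positive function $a(\cdot)$ with $\overline F_X(t+x\,a(t))/\overline F_X(t)\to e^{-x}$ for all $x\in\rr$. With the above interpolation, $\psi'(s)=\delta_{\lfloor s\rfloor}$ for a.e.\ $s$, where $\delta_n:=-\ln(q_{n+1}/q_n)>0$, whence
\[
\frac{\overline F_X(t+x/\gamma)}{\overline F_X(t)}=\exp\!\Big\{-\int_t^{t+x/\gamma}\delta_{\lfloor s\rfloor}\,ds\Big\}.
\]
Thus it suffices to show that $\delta_n$ converges to a finite limit $\gamma>0$: then for each fixed $x$ the integrand tends to $\gamma$ uniformly on $[t,t+x/\gamma]$ as $t\to\infty$, the integral tends to $x$, the ratio tends to $e^{-x}$, and $a\equiv1/\gamma$ works. (Equivalently one mollifies $\psi$ to a $C^2$ function and applies Gnedenko's condition $(1/\psi')'\to0$.)

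It then remains to prove that $\rho_n=\p[\tau>n+1\mid\tau>n]$ converges to some $\rho_\infty\in(0,1)$, which I regard as the main obstacle. Here I would use the identity obtained inside the proof of Theorem~\ref{teoremaimportante2}, namely $q_{n+1}=\E[(1-p_f^{Z_n})(1-p_m^{Z_n})]$, so that $\rho_n=\E[(1-p_f^{Z_n})(1-p_m^{Z_n})\mid Z_n>0]$ is a monotone functional of the conditional law of $Z_n$ given non-extinction; combining the subcritical convergence underlying Lemma~\ref{ratiocero} with the quantitative squeeze of Theorem~\ref{teoremaimportante2} (which pins down the exponential rate of $q_n$) should force this conditional law — hence $\rho_n$ — to stabilise, and then $\gamma:=-\ln\rho_\infty$ feeds back into the reduction step to give $F_X\in D(H_\text{Gumbel})$. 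The delicate point is precisely this stabilisation: it is a Yaglom-type statement for the subcritical BGWBP, made awkward by the absence of a branching property once mating is present, and I expect Hypothesis~\ref{hypo1}(5) ($\theta>\ln2$) to enter exactly here, to rule out the near-degenerate regime and keep the limiting hazard $\gamma$ strictly positive and finite so that the criterion of the reduction step genuinely applies.
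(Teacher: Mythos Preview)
Your construction of $X$ by log-linear interpolation of the tails and the subsequent reduction to the single statement $\rho_n:=\overline F_\tau(n+1)/\overline F_\tau(n)\to\rho_\infty\in(0,1)$ are both correct; in fact this reduction is exactly the content of Corollary~1 in \cite{shimura}, which the paper invokes as a black box, so you have essentially re-derived that step by hand.

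The genuine gap is your final paragraph. You correctly identify that everything hinges on the convergence of $\rho_n$, but you then propose to obtain it via a Yaglom-type limit for the conditional law of $Z_n$ given non-extinction --- a statement you do not prove and which, as you yourself note, is awkward for bisexual processes because the mating step destroys the branching property. This programme is substantially harder than what is needed and is not how the paper proceeds.

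The paper's route is much shorter: the sandwich inequalities \eqref{liminfbonito2}--\eqref{limsupbonito2} of Theorem~\ref{teoremaimportante2} are precisely the hypotheses of Theorem~2 in \cite{anderson} (with centring constants $b_m=m+1$), and that result yields $\overline F_\tau(n+1)/\overline F_\tau(n)\to e^{-\theta}$ directly --- no Yaglom limit, no analysis of the conditional distribution of $Z_n$, and Hypothesis~\ref{hypo1}(5) has already done all its work inside the proof of Theorem~\ref{teoremaimportante2}. Once this ratio convergence is in hand, your own reduction (equivalently, Corollary~1 of \cite{shimura}) finishes the argument immediately. In short: you isolated the right target $\rho_n\to\rho_\infty$, but missed that Anderson's discrete-EVT characterisation converts the already-proved squeeze into exactly the tail-ratio limit you are after.
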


\begin{proof}
    By Theorem \ref{teoremaimportante2} and Theorem 2 in \cite{anderson}, we obtain 
    $$\frac{\overline{F}_\tau(n+1)}{\overline{F}_\tau(n)}\to e^{-\theta},\quad \text{as }n\to\infty.$$
    Since $\theta>0$ implies $e^{-\theta}\in(0,1)$, Corollary 1 in \cite{shimura} gives the result.
\end{proof}

\begin{remark}Theorem 2 in \cite{anderson} requires the existence of constants $\{b_n\}_n$, which in our case are given by $b_n=n$. These constants are not unique. They can be replaced by any sequence $\{b'_n\}_n$ such that $nr^{b'_n}\to 0$, so that Lemma \ref{ratiocero} remains true. For instance, if we take $b'_n=n^\alpha$ for $\alpha>1$, we obtain using L'Hôpital's rule and the hypothesis $r<1$,

\[ \lim\limits_{x\to\infty}\frac{x}{e^{-x^\alpha\ln(r)}}=\lim\limits_{x\to\infty}\frac{1}{-\alpha x^{\alpha-1}\ln(r)e^{-x^\alpha\ln(r)}}=0.\]
    
\end{remark}

\section{Numerical Examples}

By Theorem \ref{teoremabonito2} we know that $F_\tau$ can be approximated using EVT. In this section we present some numerical examples of how this approximation is obtained using data. All our data were  simulated using specific mating functions and offspring distribution.

Before we explain the methodology, we consider the following result.

\begin{lemma}\label{lemaextincion}
    If $L(x,y)\leq Cx$ and $\E\ci f_{1,0}\cd<C^{-1}$ or if $\E\ci m_{1,0}\cd<1$ and $L_n(x,y)\leq y$ for all $n$, then $r<1$.
\end{lemma}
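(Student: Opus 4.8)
The plan is to estimate each ratio $r_k=\tfrac1k\,\E\ci Z_{n+1}\mid Z_n=k\cd$ by a constant that does not depend on $k$, and then to pass to the supremum, using the identity $r=\sup_{k\ge1}r_k$, which holds under part 2 of Hypothesis \ref{hypo1} by \cite{molinaetal1}. Thus the lemma reduces to producing a bound $r_k\le c$, valid for all $k\ge1$, with $c<1$.

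First I would fix $k\ge1$ and condition on the event $\{Z_n=k\}$. On this event the $k$ mates of generation $n$ produce $F_{n+1}=\sum_{j=1}^{k}f_{j,n}$ females and $M_{n+1}=\sum_{j=1}^{k}m_{j,n}$ males, and $Z_{n+1}$ equals the pertinent mating function evaluated at $(F_{n+1},M_{n+1})$. Using that $L_n(x,y)\le Cx$ for every $n$ in the first case, and $L_n(x,y)\le y$ for every $n$ in the second, one gets
\[
Z_{n+1}\le C\,F_{n+1}=C\sum_{j=1}^{k}f_{j,n}\quad\text{(first case)},\qquad Z_{n+1}\le M_{n+1}=\sum_{j=1}^{k}m_{j,n}\quad\text{(second case)}.
\]
The only delicate point is the identity $\E\ci\sum_{j=1}^{k}f_{j,n}\mid Z_n=k\cd=k\,\E\ci f_{1,0}\cd$: this holds because $Z_n$ is a function of the offspring variables of the generations preceding $n$, which are independent of $\{f_{j,n}\}_{j}$, so conditioning on $\{Z_n=k\}$ does not alter the law of $\sum_{j=1}^{k}f_{j,n}$; and each $f_{j,n}\overset{d}{=}f_{1,0}$ has finite mean. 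The same remark applies with $m$ in place of $f$.

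Taking conditional expectations then gives, in the first case, $\E\ci Z_{n+1}\mid Z_n=k\cd\le Ck\,\E\ci f_{1,0}\cd$, hence $r_k\le C\,\E\ci f_{1,0}\cd$ for every $k\ge1$; since $\E\ci f_{1,0}\cd<C^{-1}$, taking the supremum over $k$ yields $r\le C\,\E\ci f_{1,0}\cd<1$. In the second case the same computation gives $r_k\le\E\ci m_{1,0}\cd$ for every $k$, whence $r\le\E\ci m_{1,0}\cd<1$.

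I do not anticipate a genuine obstacle: the statement is essentially a first-moment estimate, and the only care needed is the independence argument above that pins the conditional mean of the offspring sum to $k\,\E\ci f_{1,0}\cd$ (respectively $k\,\E\ci m_{1,0}\cd$). One could equally avoid conditioning and argue directly from $F_{n+1}=\sum_{j=1}^{Z_n}f_{j,n}$ and $Z_{n+1}\le C\,F_{n+1}$, applying Wald's identity to obtain $\E\ci Z_{n+1}\cd\le C\,\E\ci Z_n\cd\,\E\ci f_{1,0}\cd$ (and $\E\ci Z_{n+1}\cd\le\E\ci Z_n\cd\,\E\ci m_{1,0}\cd$ in the second case); but since $r$ is defined through the $r_k$, the route above matches the setup most directly.
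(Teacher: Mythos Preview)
Your proposal is correct and follows essentially the same route as the paper: bound $r_k$ by applying the linear upper bound on the mating function and then use $r=\sup_{k\ge1}r_k$. The paper's proof is terser---it writes the single line $r_k=\tfrac1k\,\E\bigl[L\bigl(\sum_{j=1}^k f_{j,0},\sum_{j=1}^k m_{j,0}\bigr)\bigr]\le\tfrac{C}{k}\,\E\bigl[\sum_{j=1}^k f_{j,0}\bigr]=C\,\E[f_{1,0}]<1$---whereas you spell out the independence justification, but the argument is the same.
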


\begin{proof}
Without loss of generality, let us assume $L_n(x,y)\leq Cx$ and $\E\ci f_{1,0}\cd<C^{-1}.$

We only need to note that for any $k\in \N$, by the third condition on the mating functions and independence of $m_{j,0},f_{j,0}$ for each $j$, we have
\[
r_k=\frac 1k\E\ci L\pai\sum_{j=1}^k f_{j,0},\sum_{j=1}^k m_{j,0}\pad\cd\leq \frac Ck\E\ci\sum_{j=1}^k f_{j,0}\cd=C\E[f_{1,0}]<1.
\]
\hfill
\end{proof}

\subsection{Methodology}

In this section, we present six numerical examples with simulated values of $\tau$ under different conditions. It can be easily checked that the chosen parameters are such that the conditions of Lemma \ref{lemaextincion} are fulfilled, therefore we have extinction with probability 1.

Our approximation relies on the POT method, for which we recall Pickands, Balkema and de Haan's theorem (also known as The Second Extreme Value Theorem). In order to state this result, we recall that given random variable $X$ with distribution function $F$, the random variable $X-u$ conditioned on $X>u$ is the \emph{excess of $X$ over $u$}. 

When $F(0)=0$, the excess of a random variable over a given threshold $u$ has the distribution function

\[F_u(x) = \p[X - u \leq  x| X > u] = \frac{F(x+u)-F(u)}{\overline{F}(u)}\mathbbm{1}_{\{0\leq x \leq \omega_F -u\}}.\]

\begin{teo}(Pickands-Balkema-de
Haan Theorem) For some positive and measurable function $a(u)$, it holds that 
\begin{equation}
    \lim_{u\to\omega_F}\sup_{0\leq x \leq\omega_F-u}|F_u(x)-P_{\xi,a(u),0}|=0,\label{limitepickands}
\end{equation}
if and only if $F \in D(H_\xi)$ where $H_\xi$ is an extreme value distribution with shape parameter $\xi$ and $P_{\xi,a,b}$ is the Generalized Pareto distribution given by 
\[P_{\xi,a,b}(x) = \begin{cases}
    1-[1+\xi\left(\frac{x-b}{a}\right)]^{-\frac{1}{\xi}},\quad \text{if }\xi \neq 0,\\
    1-\exp\left\{-\frac{x-b}{a}\right\},\quad \text{if } \xi = 0.
\end{cases}\]
The expression above holds for $x\geq b$ when $\xi \geq 0$ and for $b\leq x\leq -a/\xi+b $ when $\xi<0$. In any other case, $P_{\xi,a,b}(x)=0$.
\end{teo}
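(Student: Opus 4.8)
This is the classical Second Extreme Value Theorem (Pickands–Balkema–de Haan), so rather than reprove the regular–variation theory from scratch I would reduce \eqref{limitepickands} to a tail characterization of the maximal domain of attraction and borrow that characterization from the literature. Concretely, the plan rests on the standard equivalence (see \cite{embrechtsetal} and the references therein): $F \in D(H_\xi)$ if and only if there is a positive measurable function $a(\cdot)$ with
$$\lim_{u\to\omega_F}\frac{\overline{F}(u + x\,a(u))}{\overline{F}(u)} = (1+\xi x)^{-1/\xi}$$
for every $x$ with $1+\xi x>0$, the right-hand side being read as $e^{-x}$ when $\xi=0$. Establishing this equivalence is where the machinery lives: for $\xi>0$ it amounts to regular variation of $\overline{F}$ at $\infty$ with index $-1/\xi$ (Karamata's theorem, Potter bounds, the uniform convergence theorem for regularly varying functions); for $\xi<0$ to regular variation of $x\mapsto\overline{F}(\omega_F-1/x)$ near the finite endpoint $\omega_F$; and for $\xi=0$ — the case relevant to Theorem \ref{teoremabonito2} — to $\overline{F}$ being $\Gamma$-varying, for which one uses the von Mises representation of such functions.

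Granted this characterization, the pointwise convergence in \eqref{limitepickands} is immediate. Since $F_u(x)=1-\overline{F}(u+x)/\overline{F}(u)$ on $0\le x\le\omega_F-u$, the substitution $x\mapsto a(u)\,x$ gives $F_u(a(u)\,x)\to 1-(1+\xi x)^{-1/\xi}=P_{\xi,1,0}(x)$ for each fixed admissible $x$, and a change of variables shows
$$\sup_{0\le x\le\omega_F-u}\big|F_u(x)-P_{\xi,a(u),0}(x)\big| = \sup_{y}\big|G_u(y)-P_{\xi,1,0}(y)\big|,$$
where $G_u(y):=F_u(a(u)\,y)$ is a genuine distribution function (non-decreasing, right-continuous, with the appropriate range).

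The remaining step is to upgrade pointwise convergence to the supremum. Because each $G_u$ is non-decreasing and the limit $P_{\xi,1,0}$ is a continuous distribution function, Pólya's theorem yields $\sup_y|G_u(y)-P_{\xi,1,0}(y)|\to 0$; when $\xi<0$ one also checks the behaviour of $G_u$ up to the finite right endpoint $-1/\xi$ so that the supremum over the now bounded range is genuinely controlled. The converse is soft: if \eqref{limitepickands} holds, then in particular $F_u(a(u)\,x)\to P_{\xi,1,0}(x)$ pointwise, which is exactly the tail-ratio limit above, whence $F\in D(H_\xi)$.

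The main obstacle is the first stage — proving the equivalence between $F\in D(H_\xi)$ and the tail-ratio limit — since it encapsulates the full theory of regularly varying and $\Gamma$-varying functions (representation theorems, Potter-type bounds, locally uniform convergence), whereas everything afterward is a change of variables together with a monotonicity/continuity argument. If one only needs the Gumbel case used in this paper, the first stage collapses to the single statement ``$F\in D(H_0)\iff\overline{F}\in\Gamma$,'' and the two remaining steps go through verbatim with $(1+\xi x)^{-1/\xi}$ replaced by $e^{-x}$ and the range of $x$ being all of $[0,\infty)$.
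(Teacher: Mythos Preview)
The paper does not prove this theorem; it is merely \emph{recalled} as the classical Pickands--Balkema--de Haan result to motivate the POT methodology, with \cite{embrechtsetal} as the implicit reference. So there is no ``paper's own proof'' to compare against.

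That said, your sketch is a correct outline of the standard argument: reduce the domain-of-attraction condition to the tail-ratio characterization (regular variation for $\xi\neq 0$, $\Gamma$-variation for $\xi=0$), obtain pointwise convergence of $F_u(a(u)\,\cdot)$ to $P_{\xi,1,0}$, and then upgrade to uniform convergence via P\'olya's theorem using that the $G_u$ are monotone and the limit is continuous. The converse is indeed immediate from pointwise convergence. The only caveat is that the heavy lifting --- the equivalence between $F\in D(H_\xi)$ and the tail-ratio limit --- is itself a substantial theorem you are importing rather than proving, which is exactly what the paper does as well by citing the result outright.
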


\begin{remark}\label{remarkexpo}
    Note that the GPD reduces to a truncated exponential distribution when $\xi=0$, which is the case when $F\in D(H_\text{Gumbel})$.
\end{remark}
\subsection{Examples}

In view of this last result, the approximation we will perform using the result in Theorem \ref{teoremabonito2} is the one given by (\ref{limitepickands}). Since our approximation depends on discrete data, we need the following result.

\begin{propo}
    Let $Y\geq 0$ be a continuous random variable with distribution function $F_Y$ and let $X=\lfloor Y\rfloor$ be its discrete version. It holds that $\p\ci X>k\cd=1-F_Y(k)$ for all $ k\in\N.$
\end{propo}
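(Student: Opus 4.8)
The plan is to compute $\p[X>k]$ by translating the event $\{X>k\}$ into an event that involves $Y$ alone, and then to invoke the continuity of $F_Y$. Since $X=\lfloor Y\rfloor$ takes integer values and $k\in\N$ is itself an integer, the strict inequality $X>k$ is the same as $X\geq k+1$. The first step is therefore to rewrite $\{X>k\}$ using the defining property of the floor: $\lfloor Y\rfloor\geq k+1$ holds if and only if $Y\geq k+1$. This turns the discrete probability $\p[X>k]$ into $\p[Y\geq k+1]$, a quantity determined entirely by $F_Y$.

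The second step is to pass from the event $\{Y\geq\,\cdot\,\}$ to the distribution function, and here continuity of $F_Y$ is the key tool. Because $Y$ is continuous, $\p[Y=a]=0$ for every $a\in\rr$, so that $\p[Y\geq a]=\p[Y>a]=1-F_Y(a)$ for all real $a$. Applying this identity eliminates any ambiguity at the endpoint and produces a closed form in terms of $F_Y$ evaluated at an integer.

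The step I expect to be the main obstacle is the index bookkeeping between the floor and the strict inequality, rather than any analytic difficulty. The naive reduction just described lands on the value $1-F_Y(k+1)$, so to reach the stated right-hand side $1-F_Y(k)$ one must be careful about how the discretization is set up: the relevant event has to collapse to $\{Y>k\}$ (equivalently, for the floor, to the non-strict event $\{X\geq k\}=\{Y\geq k\}$), and not to $\{Y\geq k+1\}$. I would therefore fix the discretization convention at the outset, verify the exact event identity it forces, and only afterwards apply the continuity step; once the correct integer is identified, the conclusion is immediate, since every remaining passage is a direct consequence of $F_Y$ having no atoms.
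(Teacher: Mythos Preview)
Your careful bookkeeping is right, and you have in fact uncovered a genuine off-by-one slip in the proposition as stated. With $X=\lfloor Y\rfloor$ and $k\in\N$, the event $\{X>k\}$ equals $\{\lfloor Y\rfloor\geq k+1\}=\{Y\geq k+1\}$, so by continuity of $F_Y$ one gets $\p[X>k]=1-F_Y(k+1)$, not $1-F_Y(k)$. The identity $\p[X>k]=1-F_Y(k)$ is the correct one for the ceiling $X=\lceil Y\rceil$, or equivalently (keeping the floor) for the non-strict event $\p[X\geq k]$, exactly as you suspected.

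The paper's own proof does not confront this: it starts from $\p[X=k]=F_Y(k)-F_Y(k-1)$ and telescopes $\sum_{j=1}^{k}\p[X=j]=F_Y(k)-F_Y(0)=F_Y(k)$. But for the floor one has $\p[\lfloor Y\rfloor=k]=\p[k\leq Y<k+1]=F_Y(k+1)-F_Y(k)$; the formula the paper invokes is the one for the ceiling. The index shift in the point-mass formula and the index shift in the tail happen to cancel, which is why the discrepancy is invisible in the written argument. Your instinct to pin down the discretization convention before computing is therefore exactly the right move; once one reads $\lfloor\cdot\rfloor$ as $\lceil\cdot\rceil$ (or replaces $X>k$ by $X\geq k$), both your direct event-translation argument and the paper's telescoping argument become immediate and essentially equivalent. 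For the paper's downstream purpose (matching the integer tail of $X$ to that of $Y$ for POT estimation) the shift is harmless, but the statement as printed is not correct for the floor.
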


\begin{proof}
    Since $P\ci X=k\cd=F_Y(k)-F_Y(k-1)$, we have

    $$\p\ci X\leq k\cd=\sum\limits_{j=1}^k\p\ci X=j\cd=F_Y(k)-F_Y(0)=F_Y(k).$$
    \end{proof}

This last proposition implies that when $F_Y\in D(H)$ we may estimate the shape parameter $\xi$ using data from the discrete random variable $X$. The technical details of such approximation are standard in EVT, and therefore, we omit them.

All the examples here are obtained using R. The GPD fit in each case is performed using the library ISMEV, available in R. 
\begin{remark}
    The methodology for choosing the threshold $u$ relies on standard EVT tools, which are covered in the wide related literature (see e.g. \cite{embrechtsetal} and the references therein). Therefore, such methodology is not discussed in this work.
\end{remark}

In what follows, we denote

$$f_1:=f_{1,0},m_1:=m_{1,0}.$$

\subsubsection{Case when \texorpdfstring{$f_1,m_1$}{fm} are iid binomially distributed}
For the first couple of examples we consider the following mating function independent of $n$,
\begin{equation*}
    L(x,y) = \min\{x,y\},
\end{equation*}
along with the following two cases:

\begin{enumerate}
    \item [\text{Case A.}] $m_1,f_1$ both follow a binomial distribution $Bin(10,0.049)$ and $Z_0=100$.
    \item [\text{Case B.}] $m_1,f_1$ follow a binomial distribution $Bin(5,0.19)$ and $Z_0=500$.
\end{enumerate}

It can be easily checked that all the conditions in Hypothesis \ref{hypo1} are fulfilled.

We simulated 50 000 trajectories of the BGWBP process and captured the time to extinction, which we also refer to as \emph{extinction generation}. The scatter plots for each case are presented below.
\begin{figure}[H]
	\centering
 \subfigure[]
	{\includegraphics[scale=.14]{ 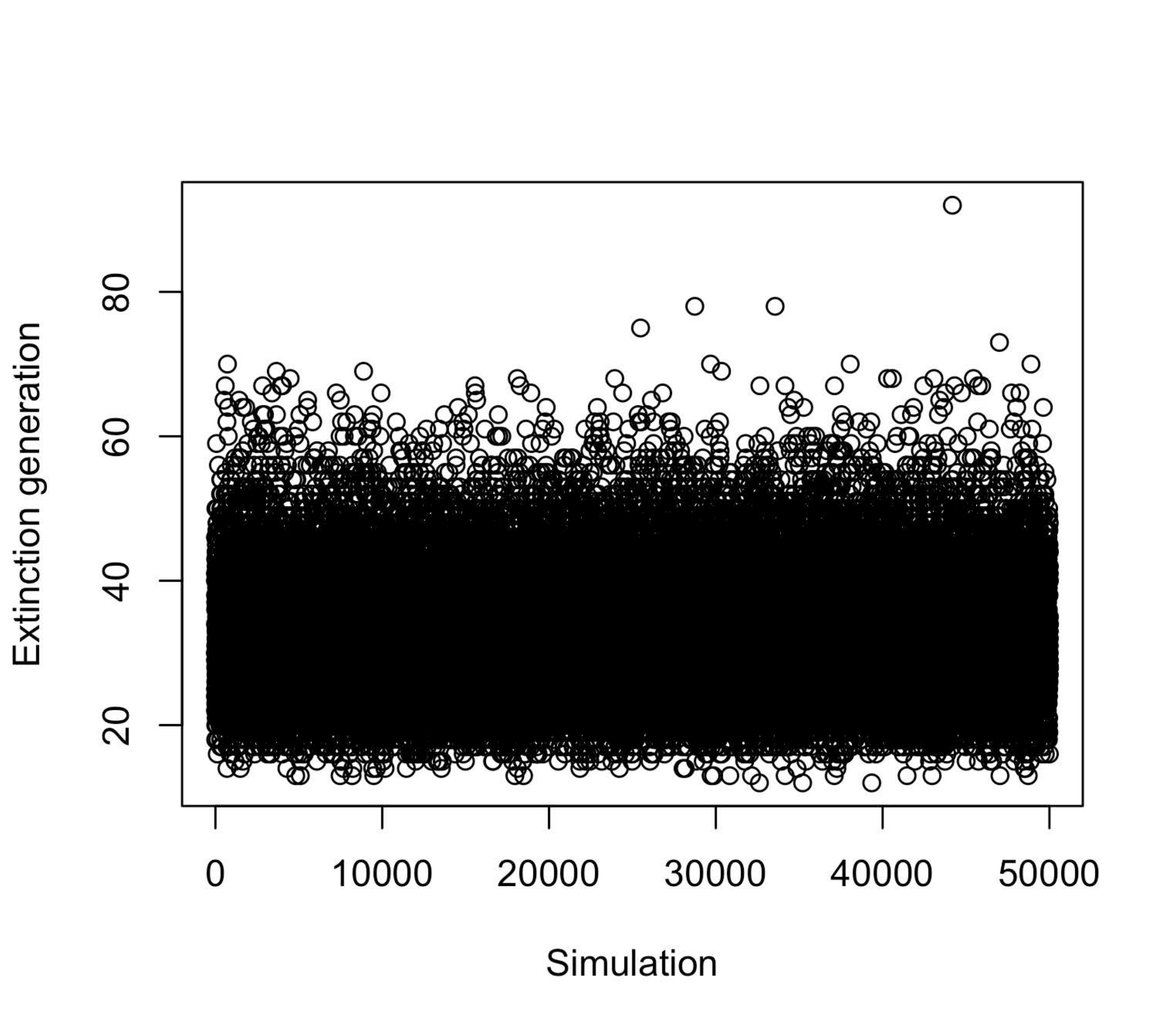}}
	\quad
	\subfigure[]
    {\includegraphics[scale=.14]{ 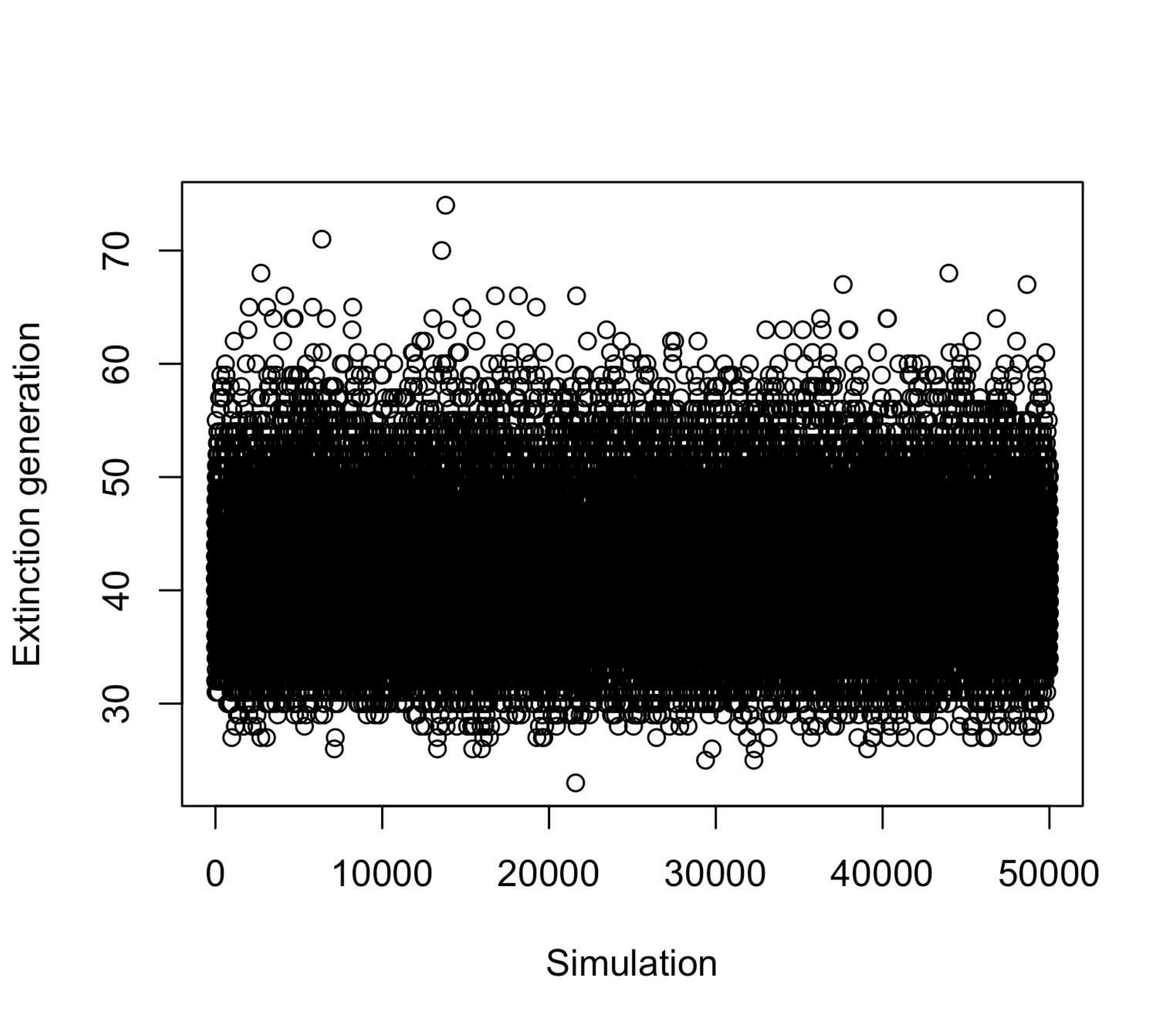}}
	{\footnotesize\caption{Simulated values of $\tau$ for case A (a) and case B (b). The fact that not all data lays in the same rectangular region suggests that the right endpoint of this distribution is not finite, which coincides with Proposition \ref{omegainfinito}.\\
  Furthermore, there are very few values that drastically differ from the others, which suggests that this data set comes from a light-tailed distribution (in agreement with Proposition \ref{extincionligera}).\\
 The data corresponding to (a) has an estimated mean of 32.29416 with an estimated standard deviation of 7.791591, while data for (b) have estimated mean and standard deviation given respectively by 41.46776 and 5.274994.}}
	\label{fig:mesh3}
\end{figure}

Using standard tools from EVT, we obtain the following estimated parameters and values of $u$:

\begin{itemize}
    \item For the case of the binomial distribution $Bin(10,0.049)$,
    \[u=45, \quad \hat{\xi} = -0.0893, \quad \hat{a}(45) = 5.85.\]
    \item For the case of the binomial distribution $Bin(5,0.19)$,
    \[u = 47.5, \quad \hat{\xi} = -0.083,\quad \hat{a}(47.5) = 3.56.\]
\end{itemize}

In both cases the estimated value of $\xi$ is very close to zero, as expected. Now we look at the Q-Q plots to check goodness of fit. These plots are presented in the figure below.

\begin{figure}[H]
	\centering
	\subfigure[]
	{\includegraphics[scale=.14]{ 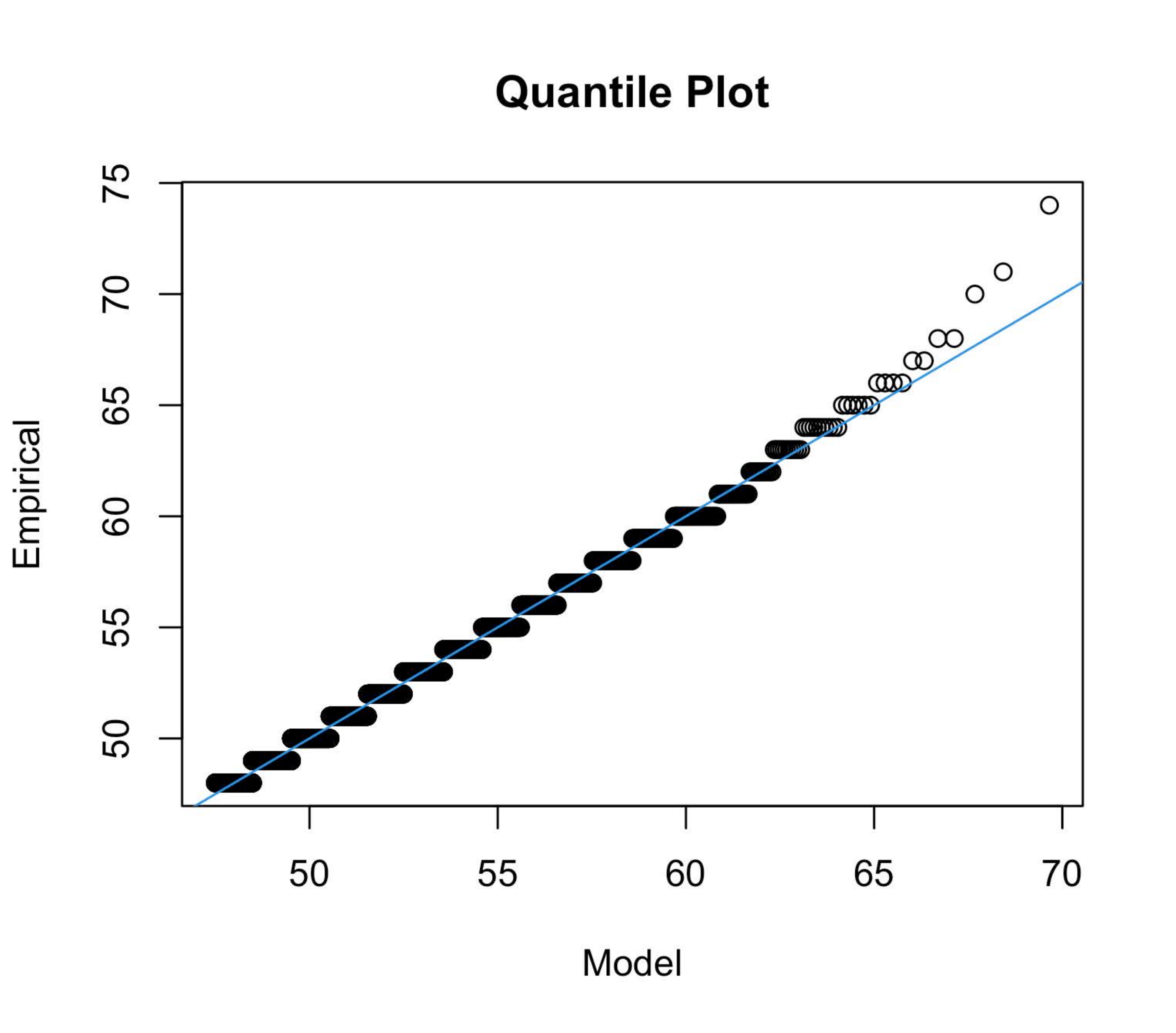}}
    \subfigure[]
	{\includegraphics[scale=.14]{ 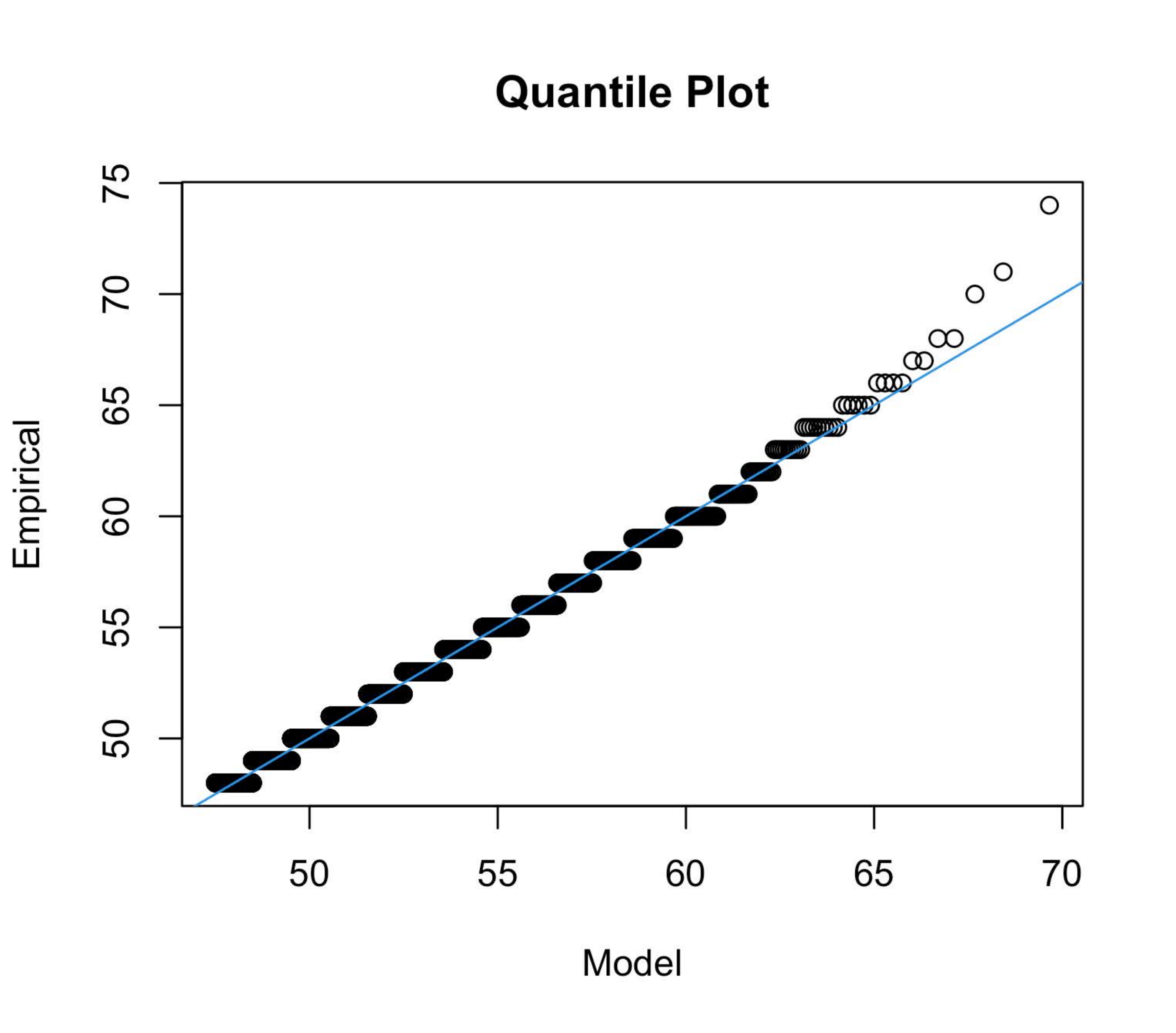}}
	{\footnotesize \caption{Goodness of fit of a GPD for case A (a) and case B (b). The Q-Q plots show that the behavior of the data indeed looks like a discretization of a data set to which the GPD would fit well.}}
	\label{fig:mesh1}
\end{figure}
Since the estimated values of the shape parameter $\xi$ in each case are close to zero, we expect that fitting an exponential distribution to the excesses would produce a Q-Q plot that behaves at least as good as those in the GPD fit. We present such plots below.
\begin{figure}[H]
	\centering
 \subfigure[]{ \includegraphics[scale=.14]{ 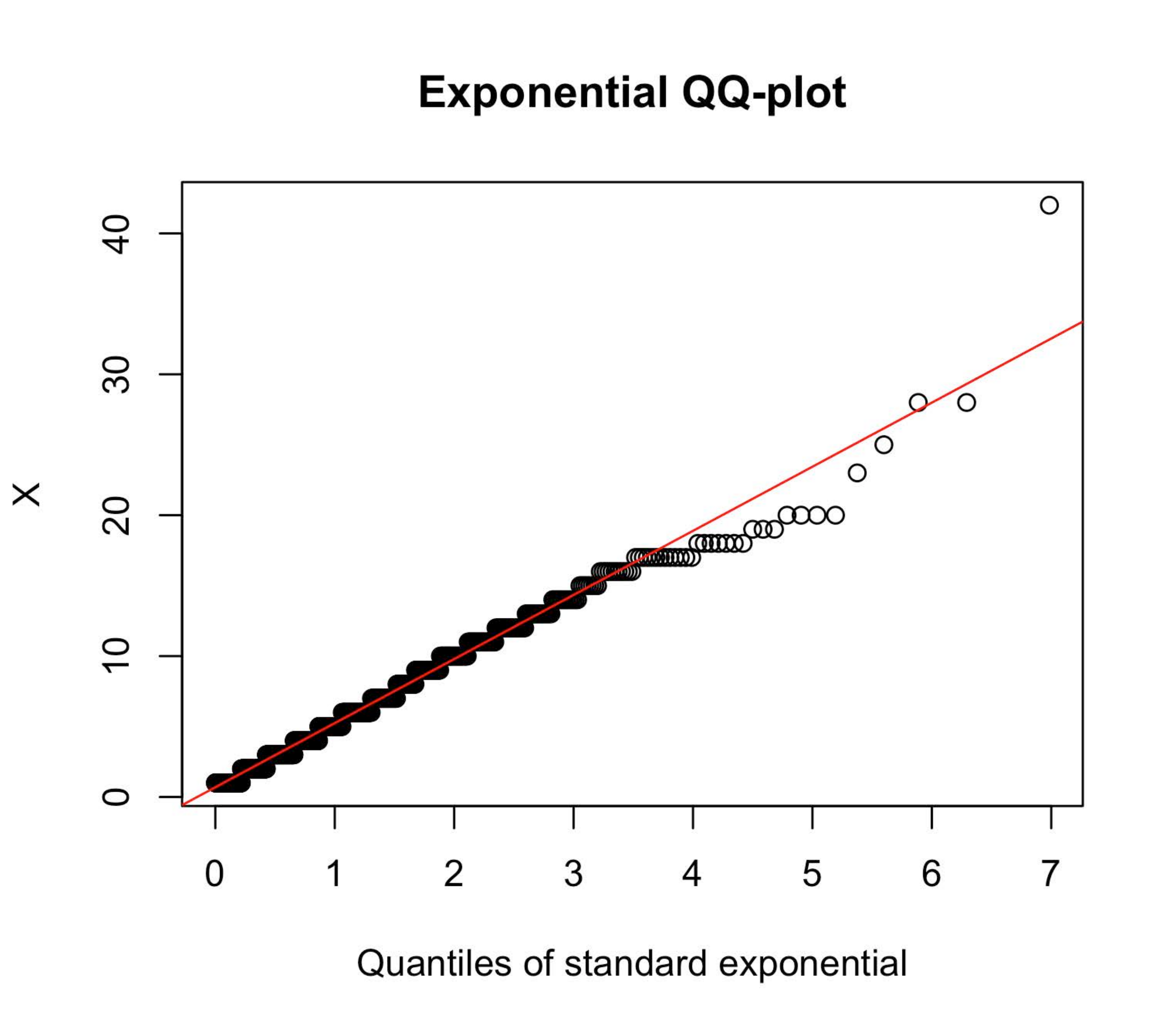}}\quad \quad
 \subfigure[]{ \includegraphics[scale=.14]{ 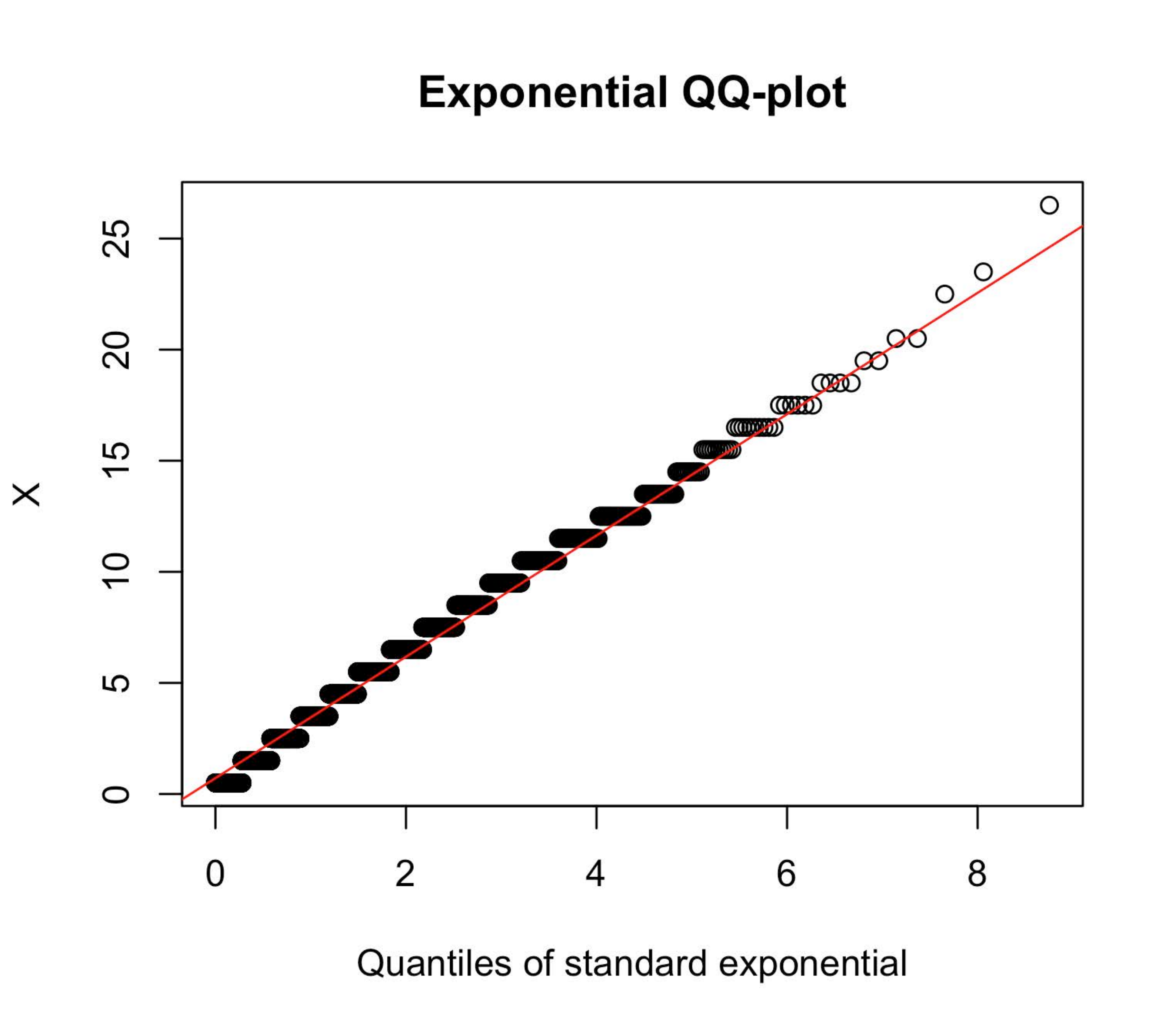}}
	{\footnotesize \caption{Goodness of fit of an exponential distribution to the data under case A (a) and case B (b). Again we see the ladder-like behavior of the data, which indicates a nice fitting of this distribution to the given data set.}}
	\label{fig:mesh}
\end{figure}

\subsubsection{Cases with a mating function depending on the previous generation}

Now we assume that $f_1\sim Geometric(0.6)$ supported on $\N\cup\{0\}$ and $m_1\sim Poisson(1.4)$. The mating functions now depend on a parameter related to the previous generation.
\begin{enumerate}
    \item [\text{Case }A.] $L_{Z_n}(F_n,M_n)=\min\{F_n,nM_n\}$,
    \item [\text{Case }B.] $L_{Z_n}(F_n,M_n)=\min\{F_n,Z_nM_n\}$.
\end{enumerate}

This time we simulate 50 00 trajectories of the BGWBP process with an initial number of mates equal to 100.
\begin{figure}[H]
	\centering
 \subfigure[]
	{\includegraphics[scale=.14]{ 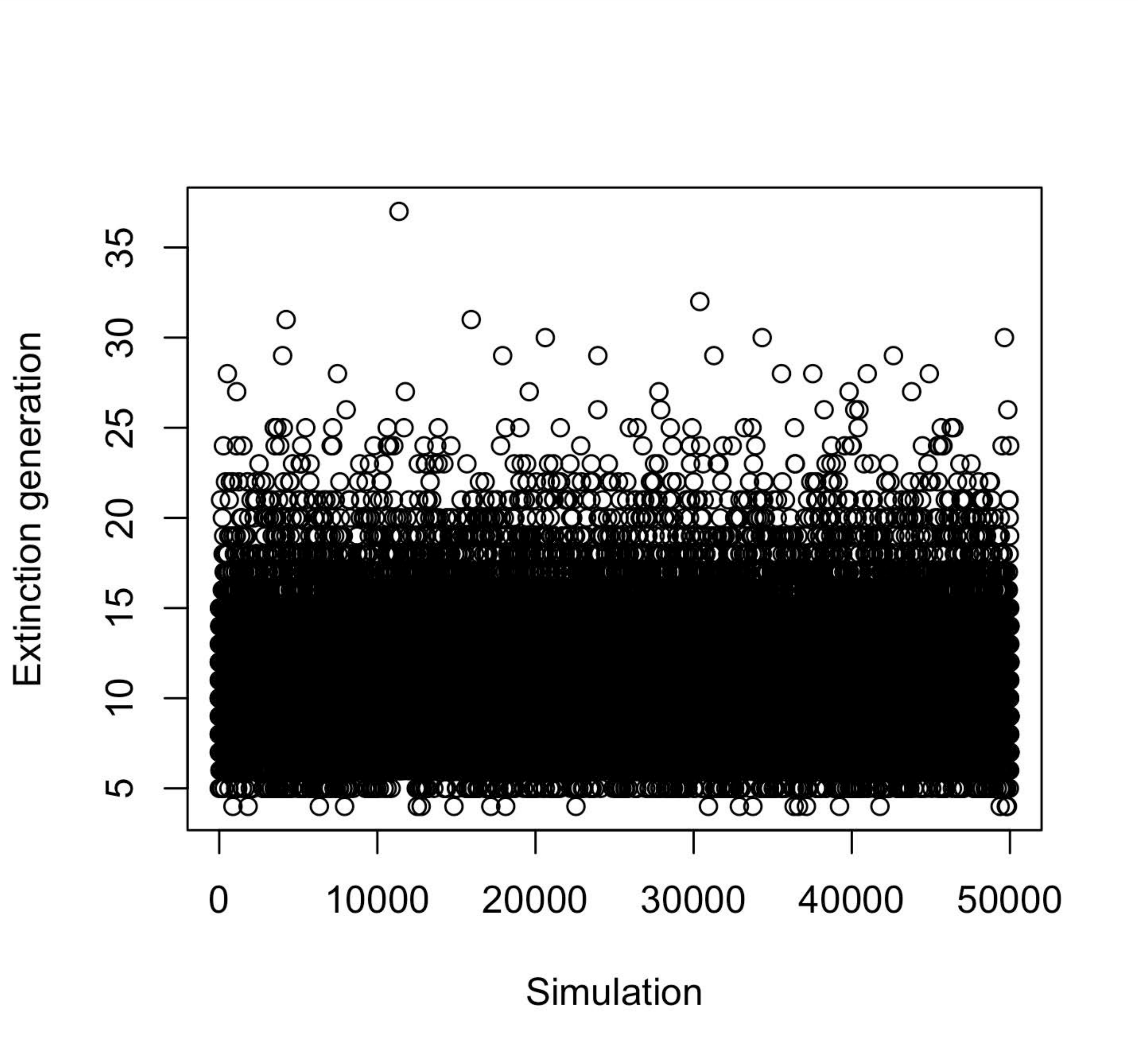}}
	\quad
	\subfigure[]
    {\includegraphics[scale=.14]{ 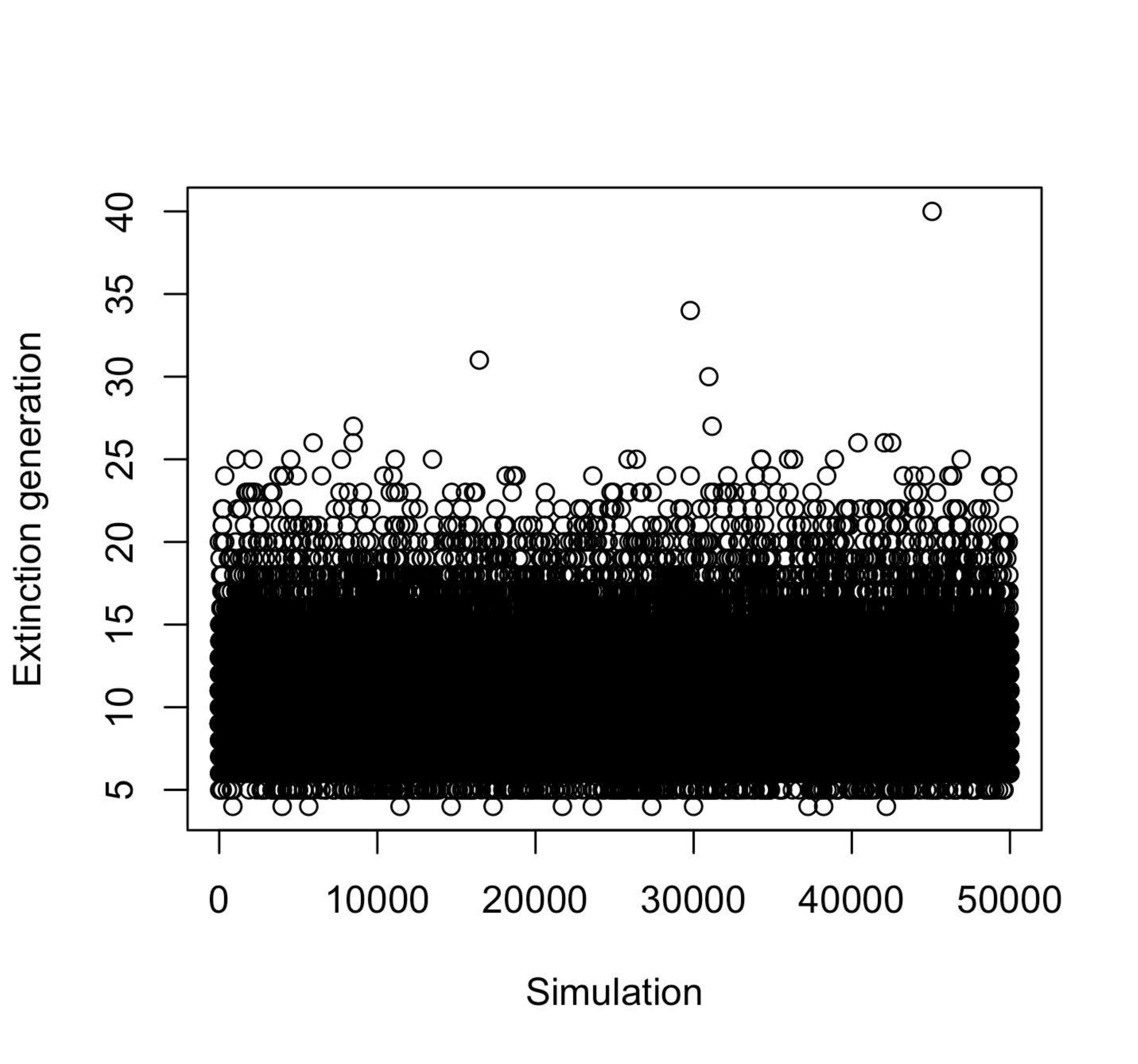}}
	{\footnotesize\caption{Simulated values of $\tau$ under case A (a) and case B (b) 
 The data corresponding to (a) has an estimated mean of 4.39602 with an estimated standard deviation of 1.883328, while data for (b) have estimated mean and standard deviation given respectively by 4.29541 and 1.684622. In both cases, the dispersion of the data seems to support the fact that the distribution behind each data set has an infinite endpoint and a light-tail.}}
	\label{fig:scatterE34}
\end{figure}

The fitting of a GPD is now performed. The values of $u$ along with the estimated parameters in these two examples are

\begin{enumerate}[\text{Case }A.]
    \item $u=11.56,\hat{\xi}=-0.009484283, \hat{a}(11.56)=2.314129977.$
    \item$u=7.46,\hat{\xi}=-0.005658594, \hat{a}(7.48)=2.025406077.$
\end{enumerate}

Note that again, both estimations of $\xi$ are very close to zero and the estimated values of $a(u)$ do not differ much.

\begin{figure}[H]
	\centering
	\subfigure[]
    {\includegraphics[scale=0.14]{ 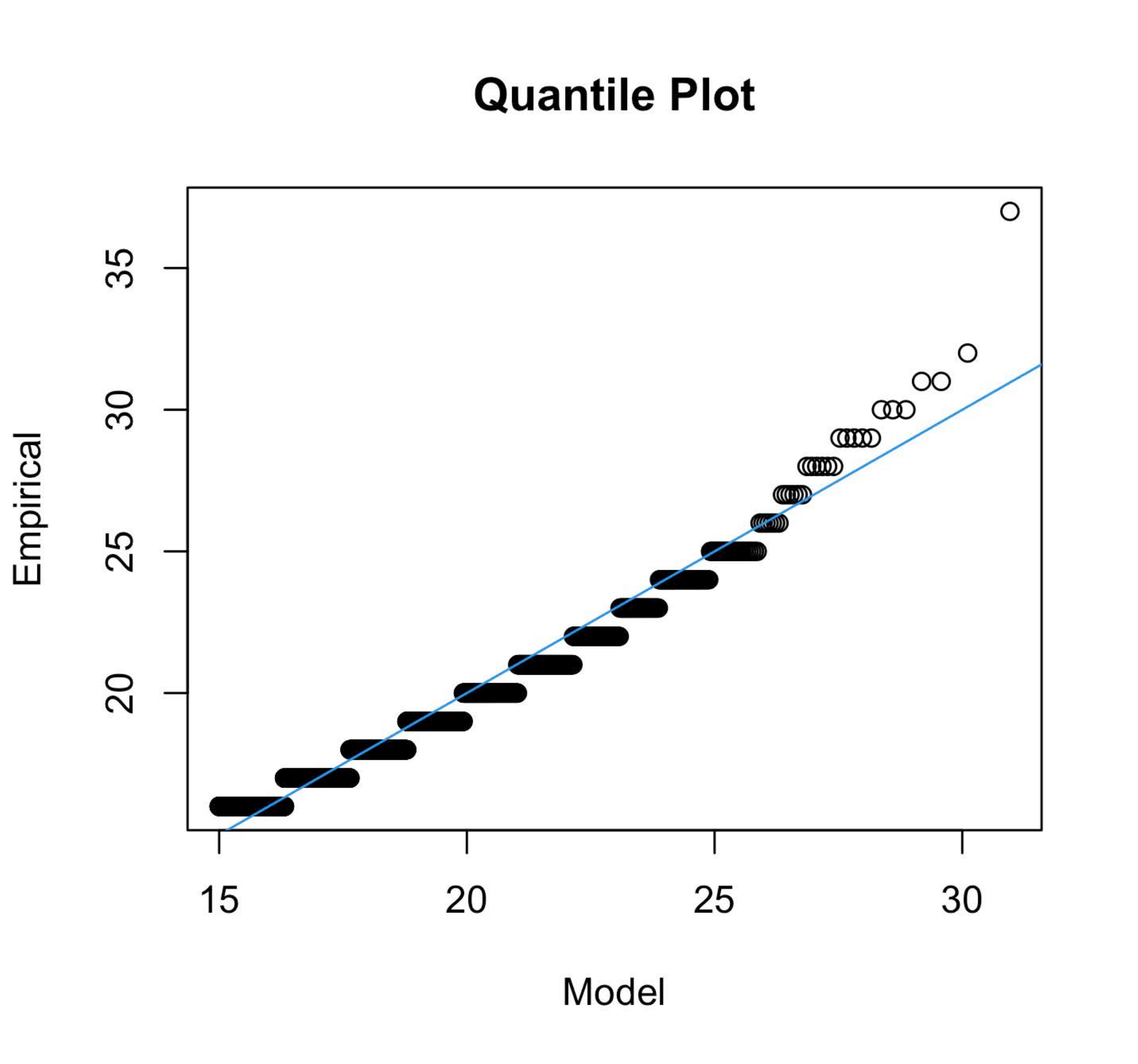}}
    \subfigure[]
    {\includegraphics[scale=0.14]{ 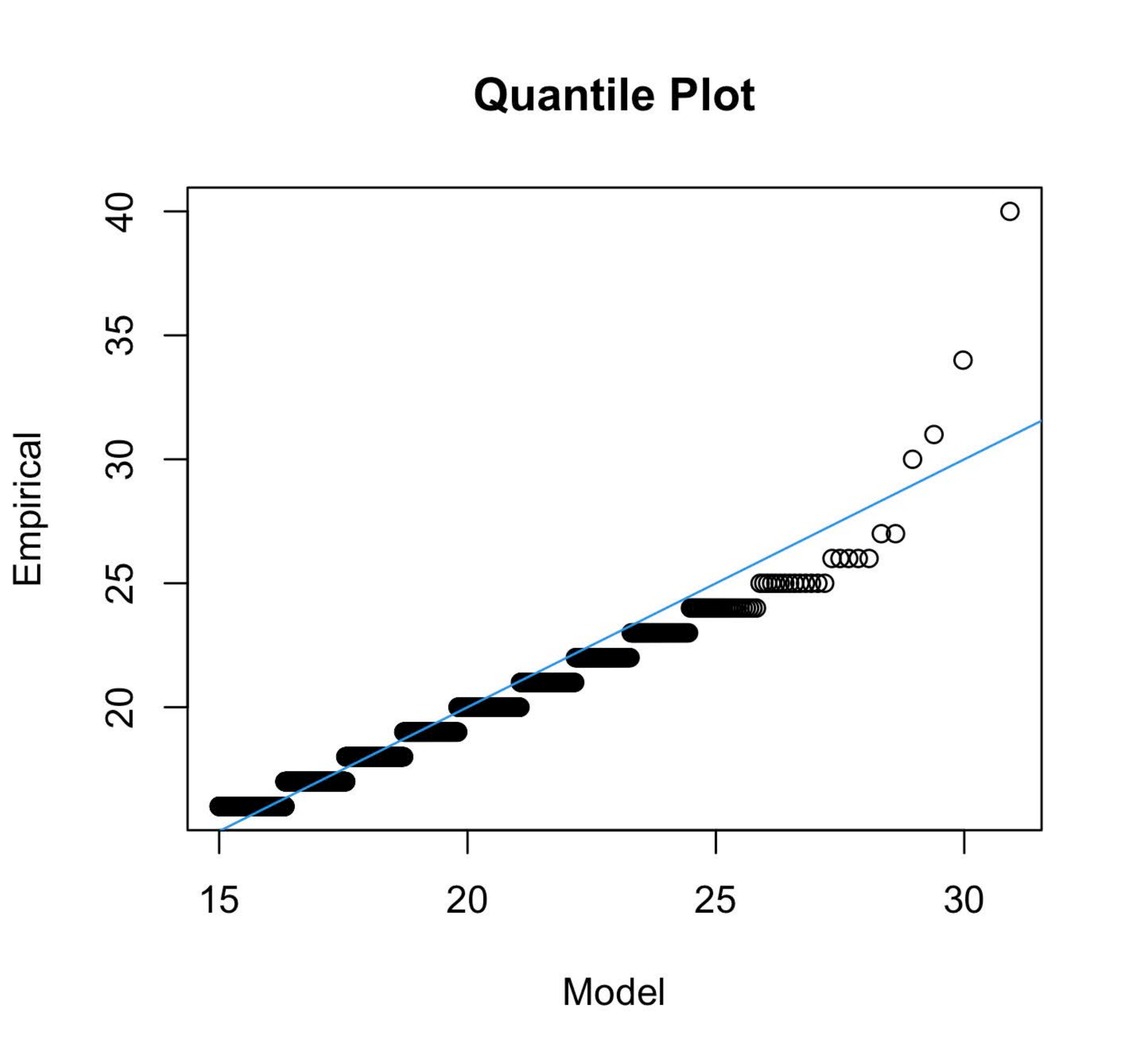}}
	{\footnotesize \caption{Goodness of fit of a GPD for case A (a) and case B (b). The Q-Q plots support the fac that data come from a discretizatión of a distribution in some maximal domain of attraction.}}
	\label{fig:GOFEx34}
\end{figure}

Since the estimated values of the shape parameter $\xi$ are very close to zero, this gives evidence that the data come from a distribution which is the discretization of some other distribution in the Gumbel maximal domain of attraction, which is the case when $\xi=0$. In view of this we explore the fitting of an exponential distribution to the data. Below we present the Q-Q plots of this exponential fitting. 
\begin{figure}[H]
	\centering
 \subfigure[]{ \includegraphics[scale=.14]{ 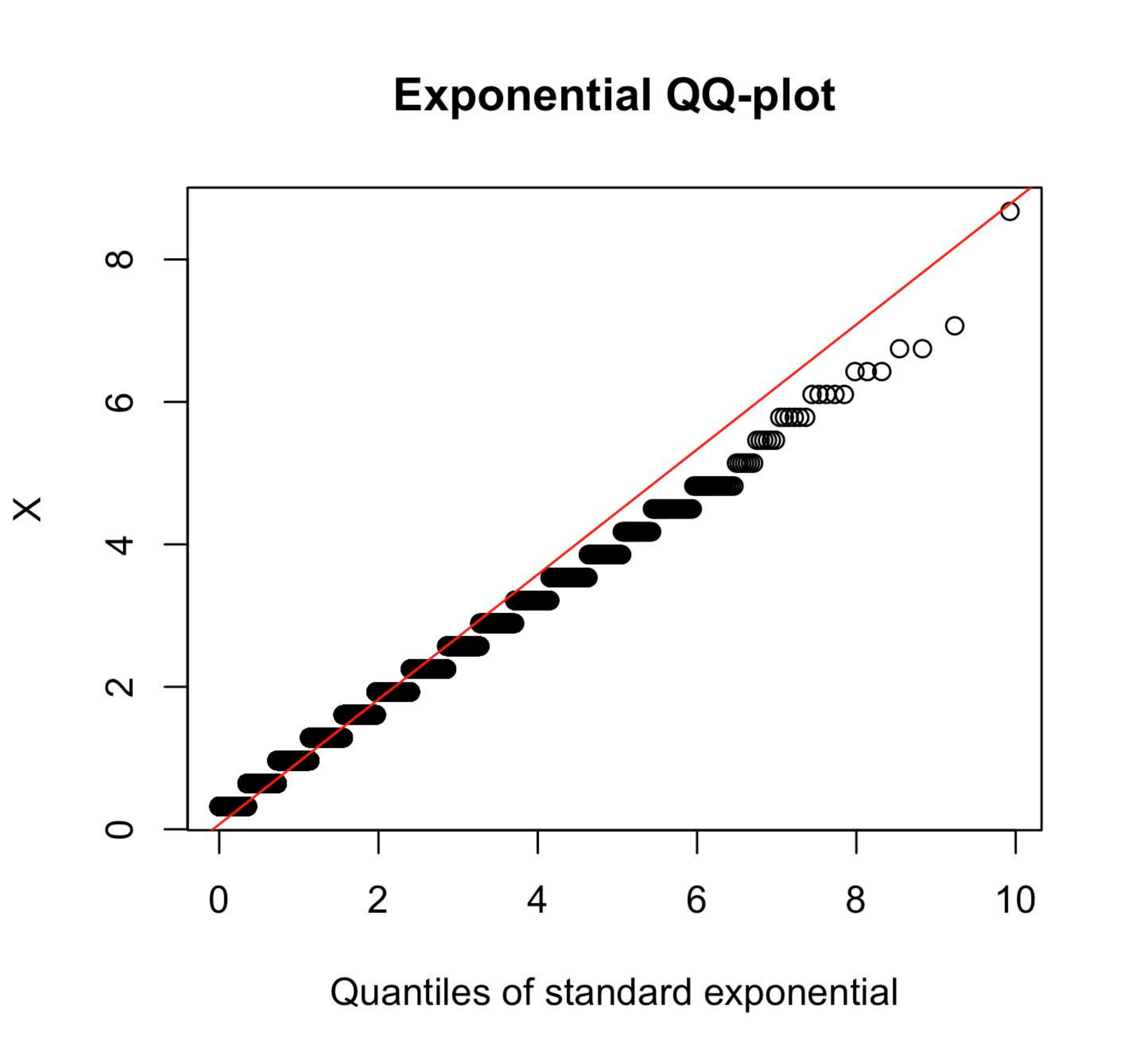}}\quad 
 \subfigure[]{ \includegraphics[scale=.14]{ 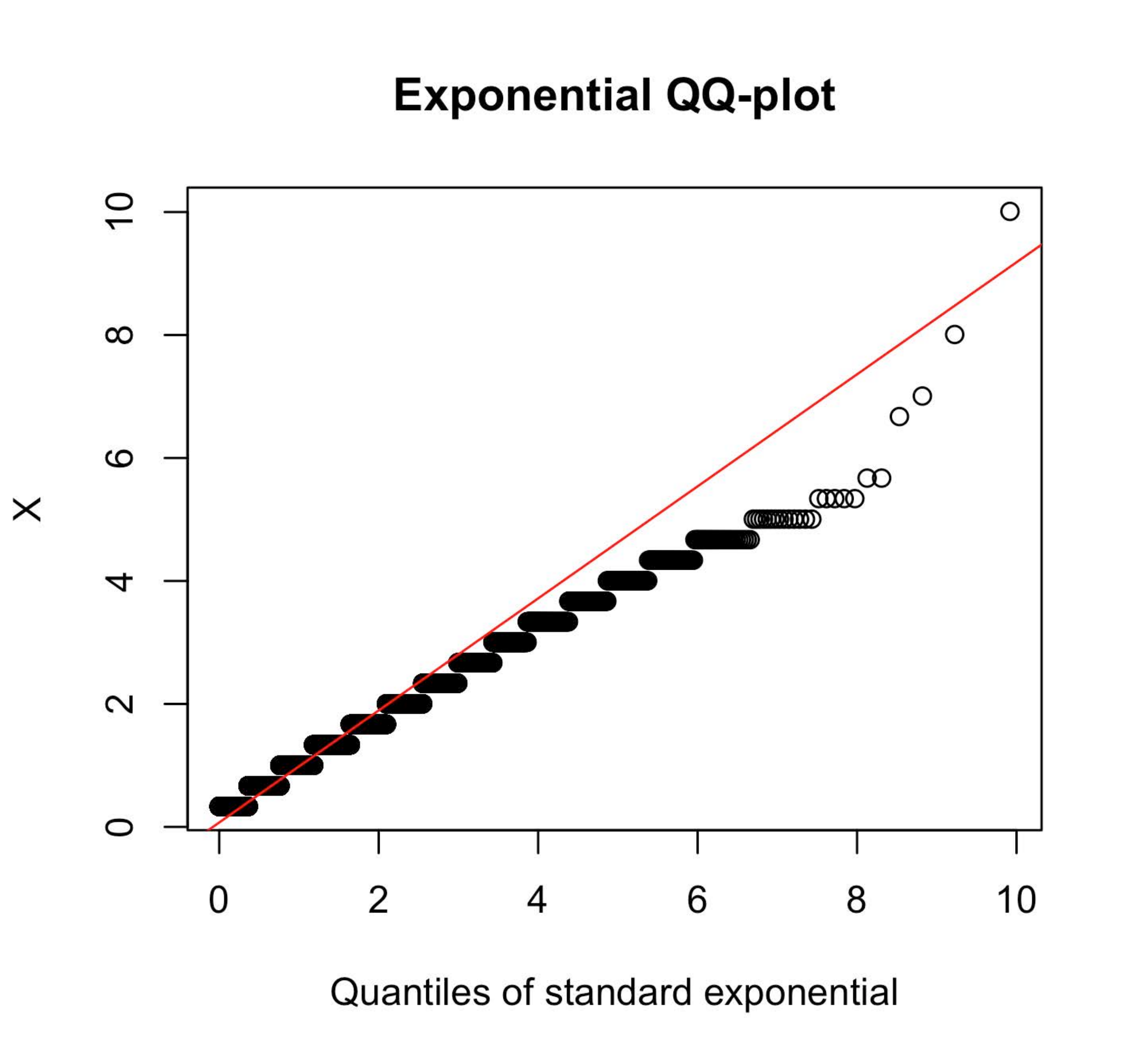}}
	{\footnotesize \caption{Goodness of fit of an exponential distribution to the data from case A (a) and case B (b). They do not look too different from those obtained in the GPD fit.}}
	\label{fig:QQExpEx34}
\end{figure}

\subsubsection{Cases with larger variance and wider range}
In these last two examples we consider

\begin{enumerate}
    \item [\textbf{Case A.}]$L_{Z_n}(F_n,M_n)=F_n\min(2,M_n)$, $f_1,m_1\sim Bin(10,0.049)$,
    \item [\text{Case B.}] $L_{Z_n}(F_n,M_n)=\min\{F_n,Z_nM_n\}$, $f_1\sim Geometric(0.51)$ supported on $\N\cup\{0\}$ and $m_1\sim Poisson(3)$.
\end{enumerate}

In both cases the initial population is set to 100 and the total number of simulations is 50 000. In the first case we have $\theta=1.004824$ while in the second case $\theta=3.6733446$.

The mating function in (b) represents the case when there are at most as many mates as females in the current generation, so each female forms at most one mate. We see how the choice of this mating function changes the adjustment of the associated GPD.

We begin by presenting the scatter plots of the simulated data. We clearly see that the range of $\tau$ has increased with respect to the previous four examples.

\begin{figure}[H]
    \centering
    \subfigure[]{\includegraphics[scale=0.14]{ 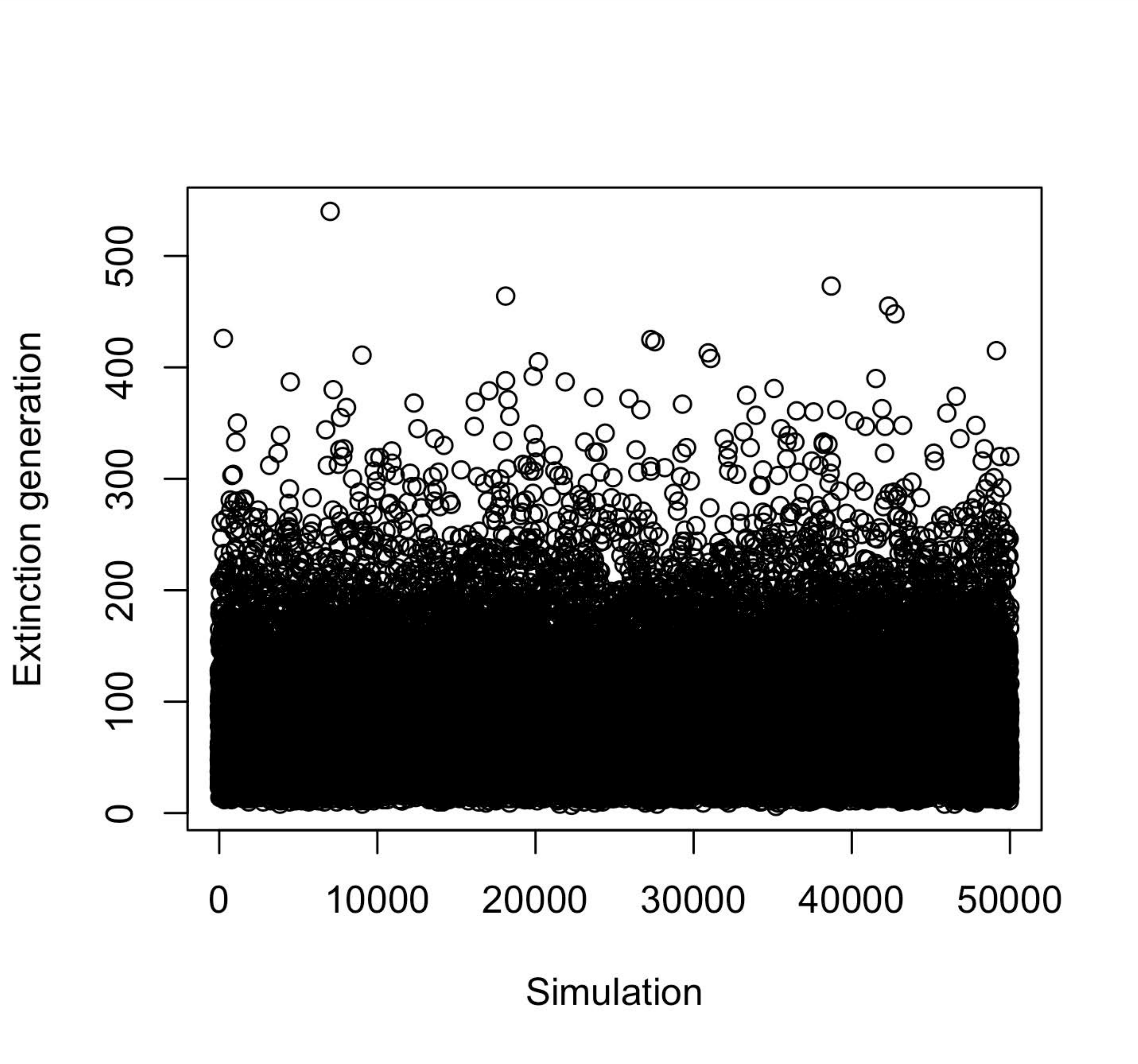}} \subfigure[]{\includegraphics[scale=0.14]{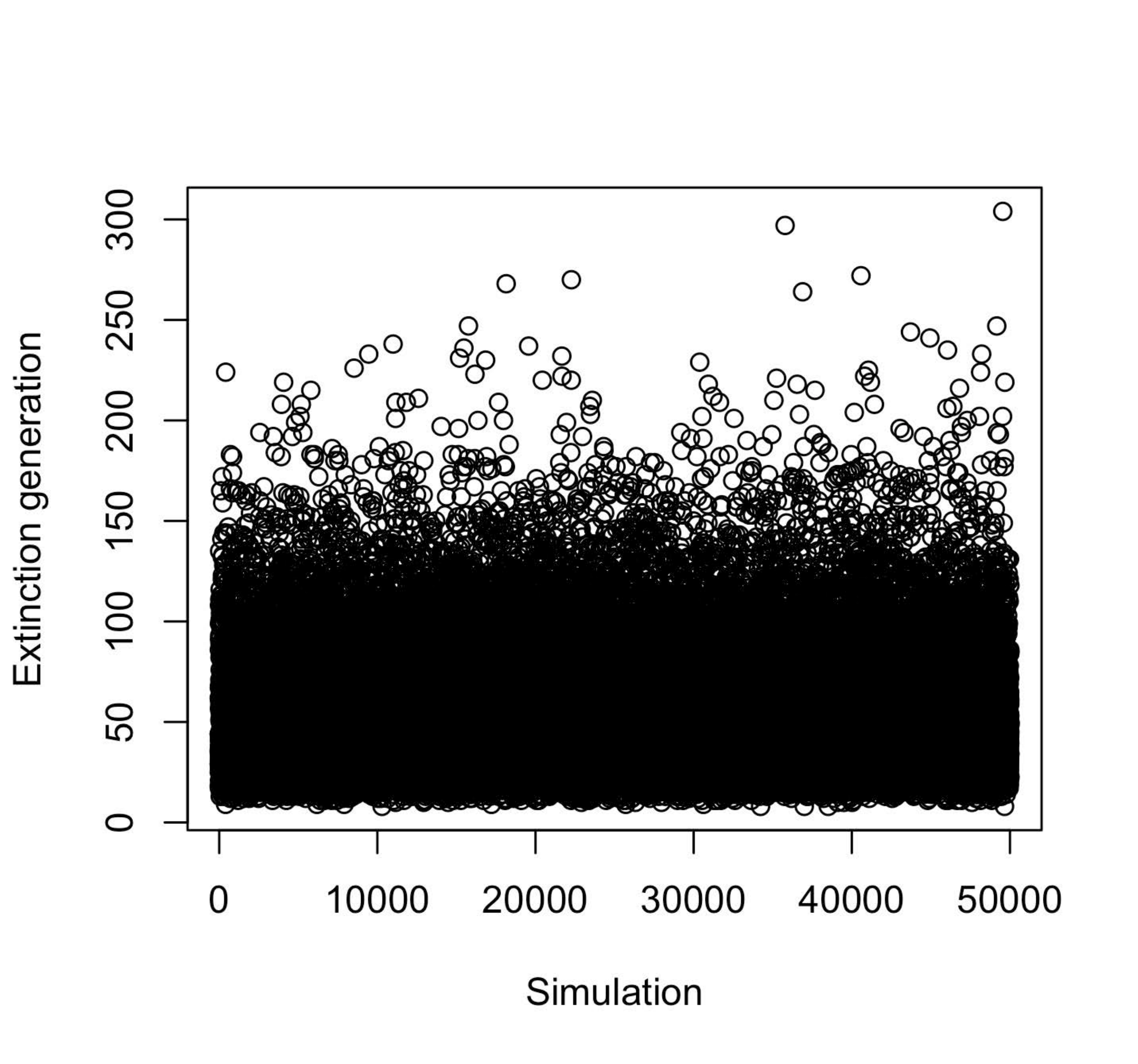}}
    \caption{{\footnotesize Scatter plots for $\tau$ under case A (a) and case B (b). The mating function in (a) gives a wider range for $\tau$, which is expected given that the estimated variance and standard deviation for the data are, respectively, 2309.401 and 48.05622, with an estimated mean of 74.63393. For the mating function in (b), the variance is 809.244 and the standard deviation is 28.4472 (both estimated using the data). The estimated mean in this case equals 56.0819.}}
    \label{fig:scatterplotsEx4and5}
\end{figure}

Now we present the fitting of a GPD to the excesses $X-u$ given $X>u$. The chosen values of $u$ and the estimated parameters are

\begin{itemize}
    \item For $L_{Z_n}(F_n,M_n)=F_n\min\{2,M_n\}$,
    \[u = 160,\quad \hat{\xi}= -0.00678,\quad \hat{a}(160)=44.38284.\]
    \item For $L_{Z_n}(F_n,M_n)=\min\{F_n,Z_nM_n\}$,
    \[u= 105,\quad \hat{\xi}= -0.03488,\quad \hat{a}(105)=26.34423.\]
\end{itemize}

Note that in both cases the shape parameter $\hat{\xi}$ is very close to zero and, as in the two previous examples, the scale parameters $\hat{a}(u)$ are very much alike. The Q-Q plots to check goodness of fit are given below.

\begin{figure}[H]
    \centering
    \subfigure[]{\includegraphics[scale=0.14]{ 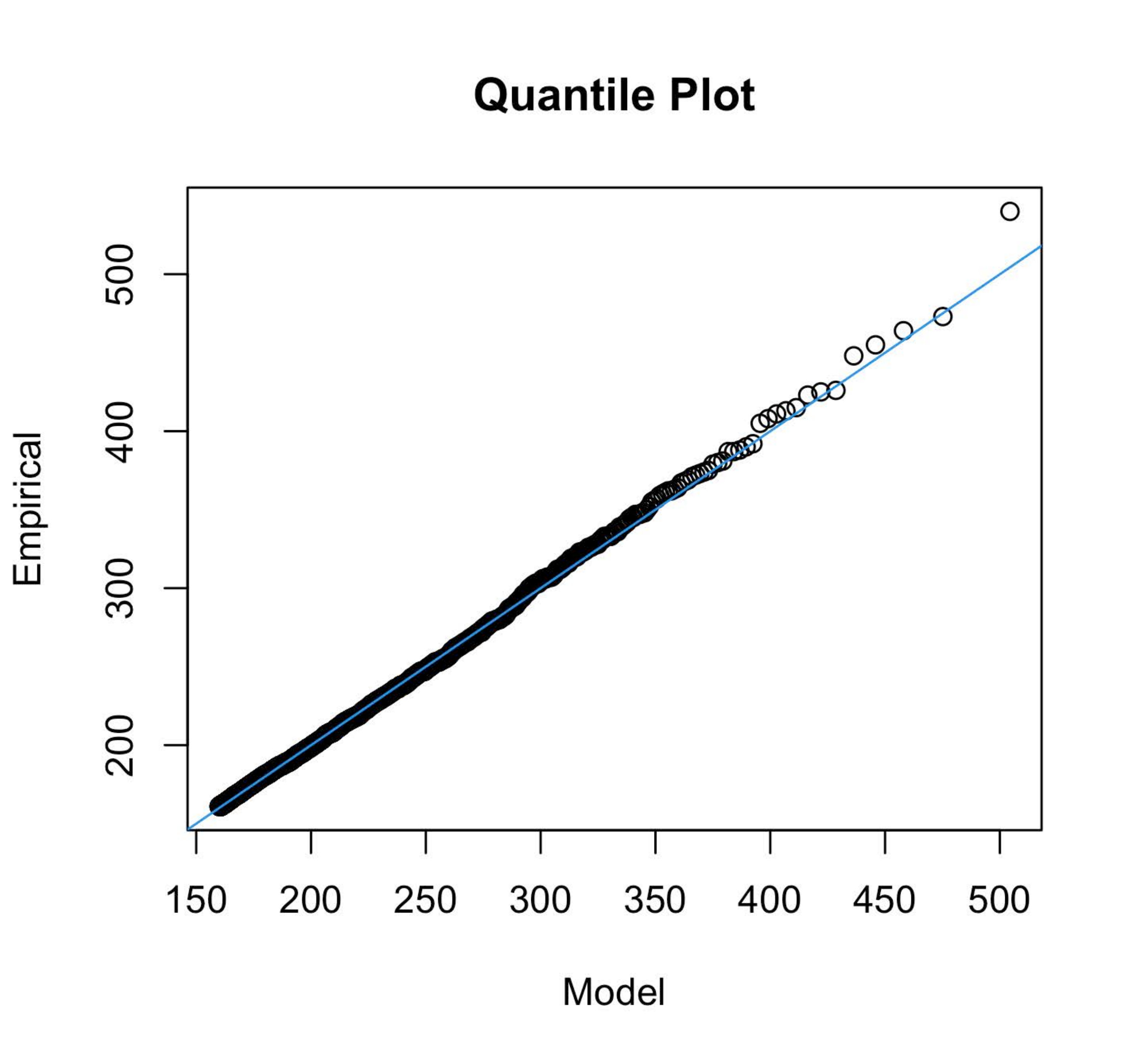}} \subfigure[]{\includegraphics[scale=0.14]{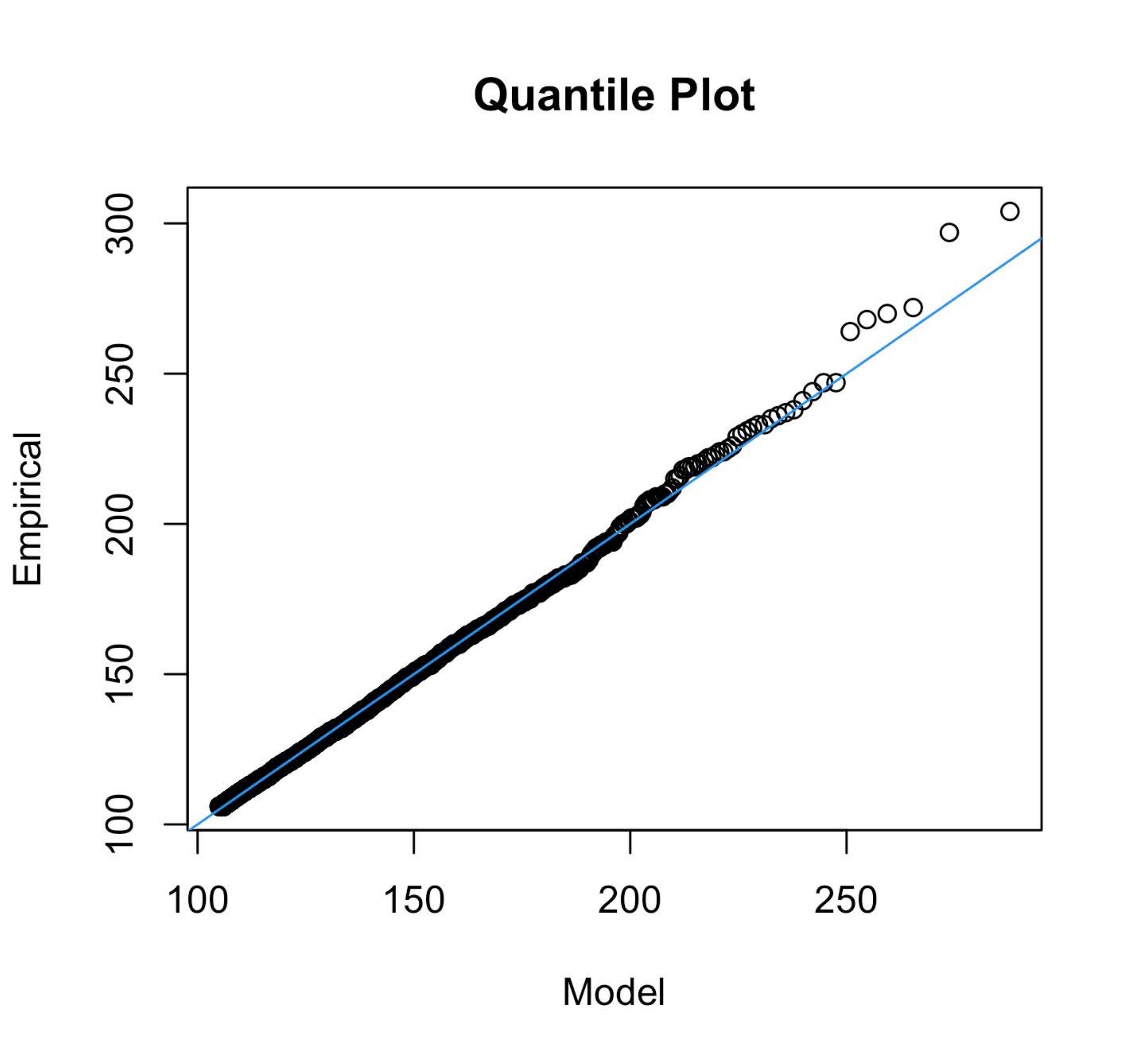}}
    \caption{{\footnotesize Q-Q plots for the GPD fitted to the excesses of $\tau$, under case A (a) and Case B (b). Unlike the previous examples, the plots show a very good fit to the corresponding identity lines, a behavior that might be due to the total number of data and the higher variances.}}
    \label{fig:QQPPEx4and5}
\end{figure}

Finally we provide the Q-Q plots of a fitted exponential distribution to the excesses.

\begin{figure}[H]
    \centering
    \subfigure[]{\includegraphics[scale=0.14]{ 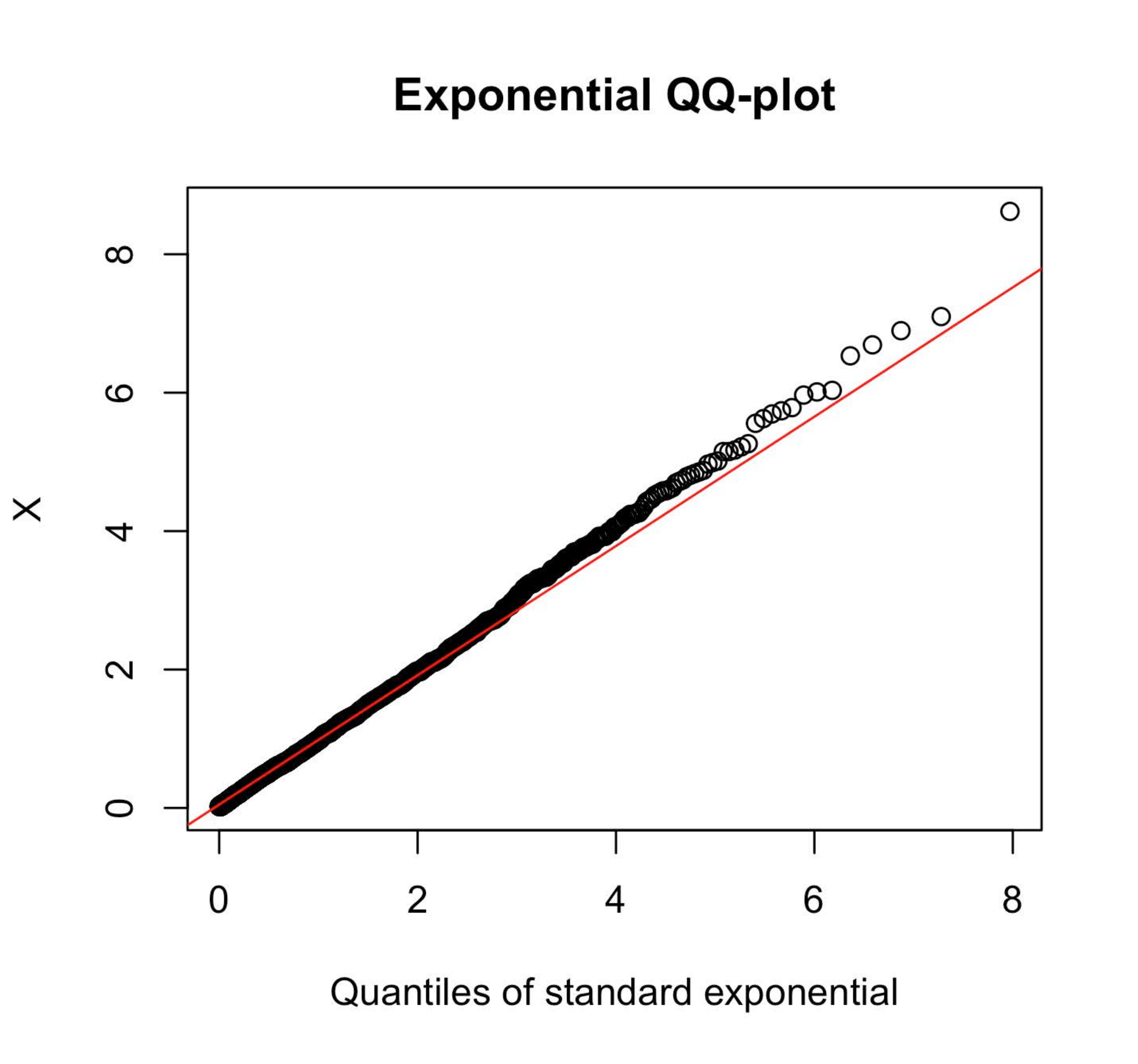}} \subfigure[]{\includegraphics[scale=0.14]{ 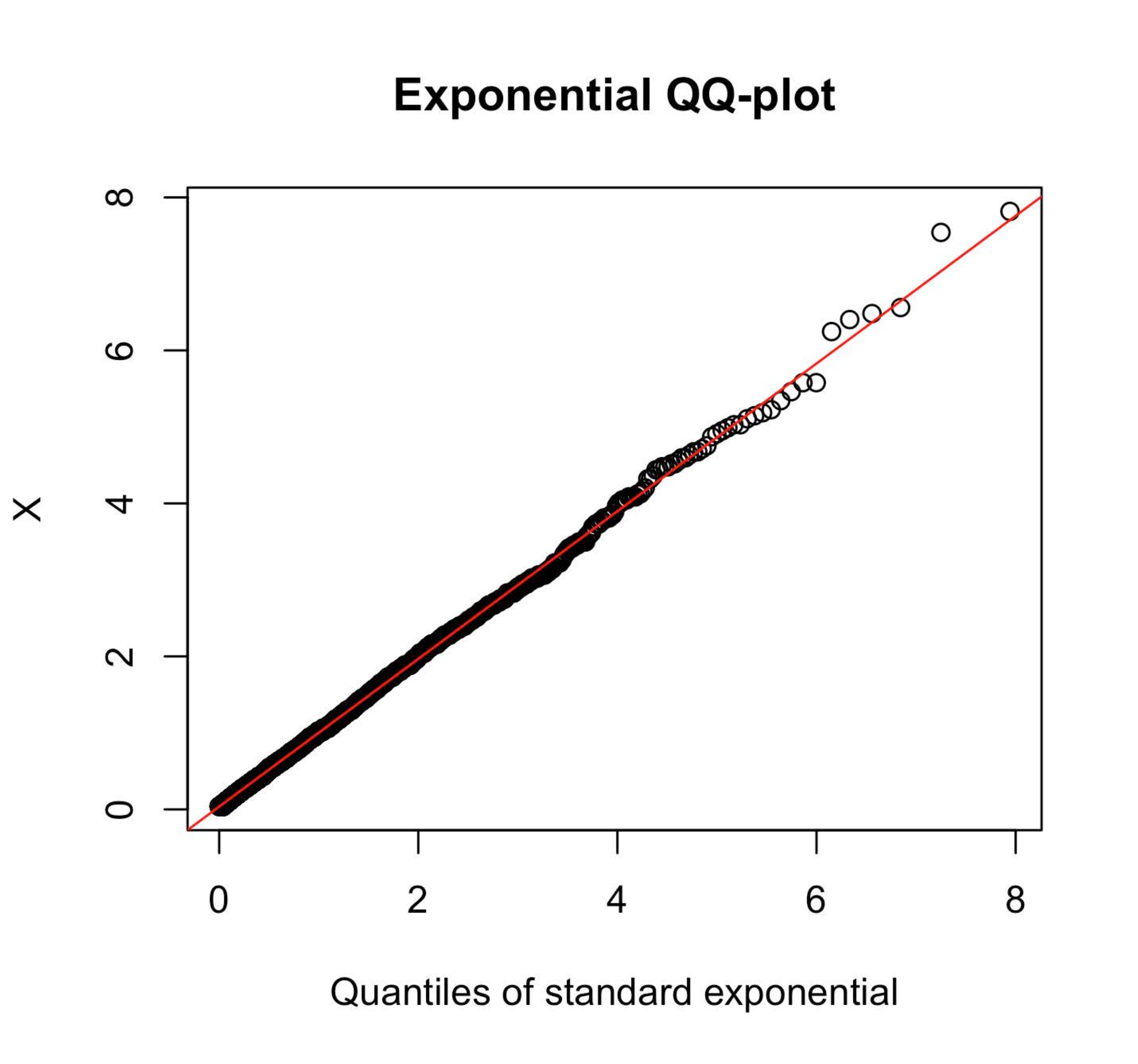}}
    \caption{{\footnotesize Q-Q plots for exponential fitting to the excesses of $\tau$ under case A (a) case B (b). Once more, the data get very close to the identity lines and, by comparing these plots to those from the GPD fit, we note that the approximation based on a GPD and the one using an exponential distribution seem to be both equally as good.}}
    \label{fig:QQExpoEx4and5}
\end{figure}

\subsubsection{Cases when \texorpdfstring{$\theta\leq \ln(2)$}{notheta}}\label{notheta}

We repeat all previous six examples choosing parameters such that $\theta\leq \ln(2)$. As we shall see, this results in a faster extinction, a situation in which the functions from ISMEV used in the fitting of the GPD cannot be applied.

The descriptive statistics along with the distributions used to generated this new set of examples are given in the table below and illustrated using scatter plots. It is noteworthy how the variances are always smaller than 1.

\begin{table}[H]
    \centering
    {\footnotesize
    \begin{tabular}{|c|c|c|c|c|c|c|}
    \hline
       Example  & dist. $f_1$ & dist. $m_1$ & $\theta$ & est. mean & est. variance & range  \\
       \hline
        1A & Bin(3,0.1) & Bin(3,0.1) & 0.6321631 & 1.079 & 0.085 & $\{1,\dots,6\}$\\
        1B & Bin(10,0.032) & Bin(10,0.032) & 0.6504638 & 1.083 & 0.080 & $\{1,\dots,6\}$\\
        2A & Geo(0.9) & Poi(0.55) & 0.6553605 & 1.044 & 0.046 & $\{1,\dots,4\}$\\
        2B & Geom(0.7) & Poi(0.1) & 0.55667 & 1.057 & 0.061 & $\{1,\dots,4\}$\\
        3A & Bin(10,0.039) & Bin(10,0.039) & 0.6504638 & 1.08118 & 0.1129073 & $\{1,\dots,8\}$\\
        3B & Geom(0.6) & Poi(0.1) & 0.61082 & 1.03879 & 0.03990573 & $\{1,\dots,4\}$\\
        \hline
    \end{tabular}
    \caption{Summary of parameters and distributions used in this new set of examples.}}
    \label{tablesummary}
\end{table}

\begin{figure}[H]
\centering
   \begin{tabular}{cc}
       \includegraphics[scale=0.4]{ 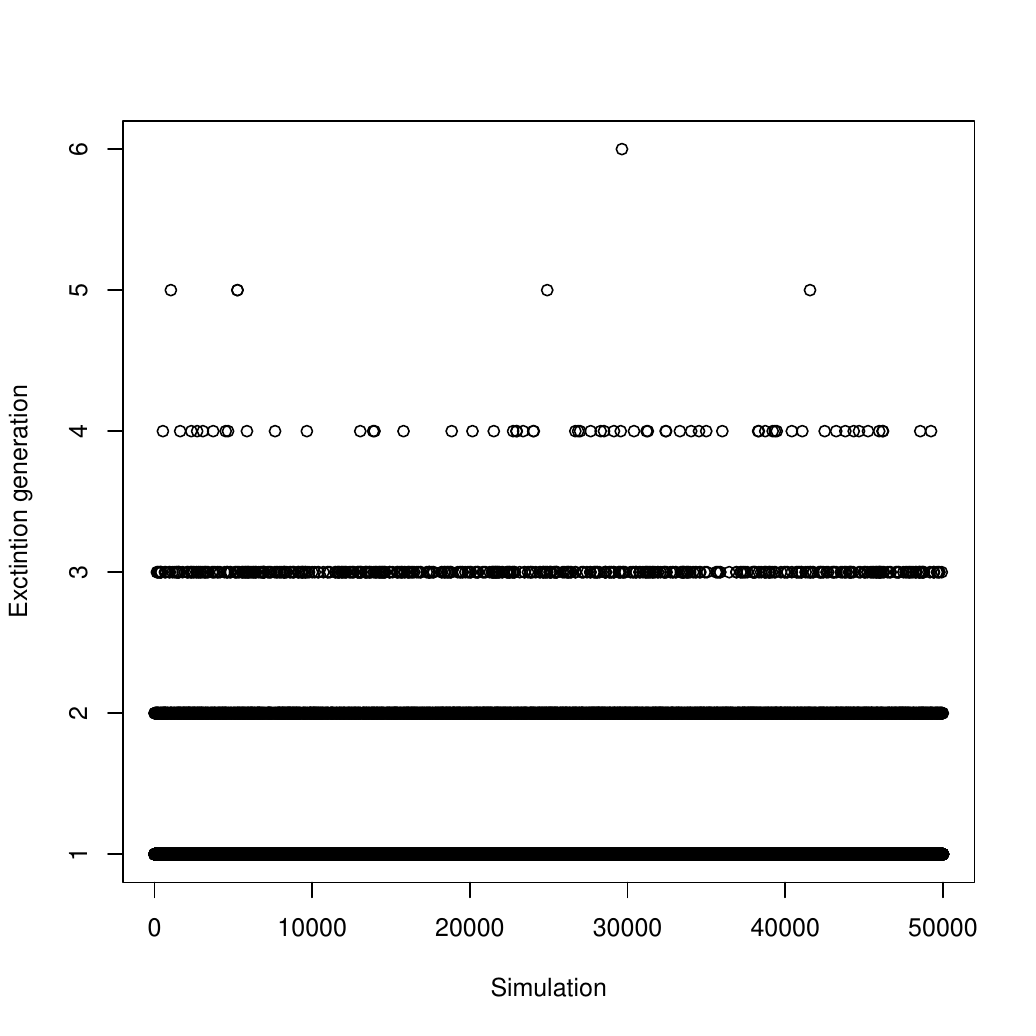} &
    \includegraphics[scale=0.4]{ 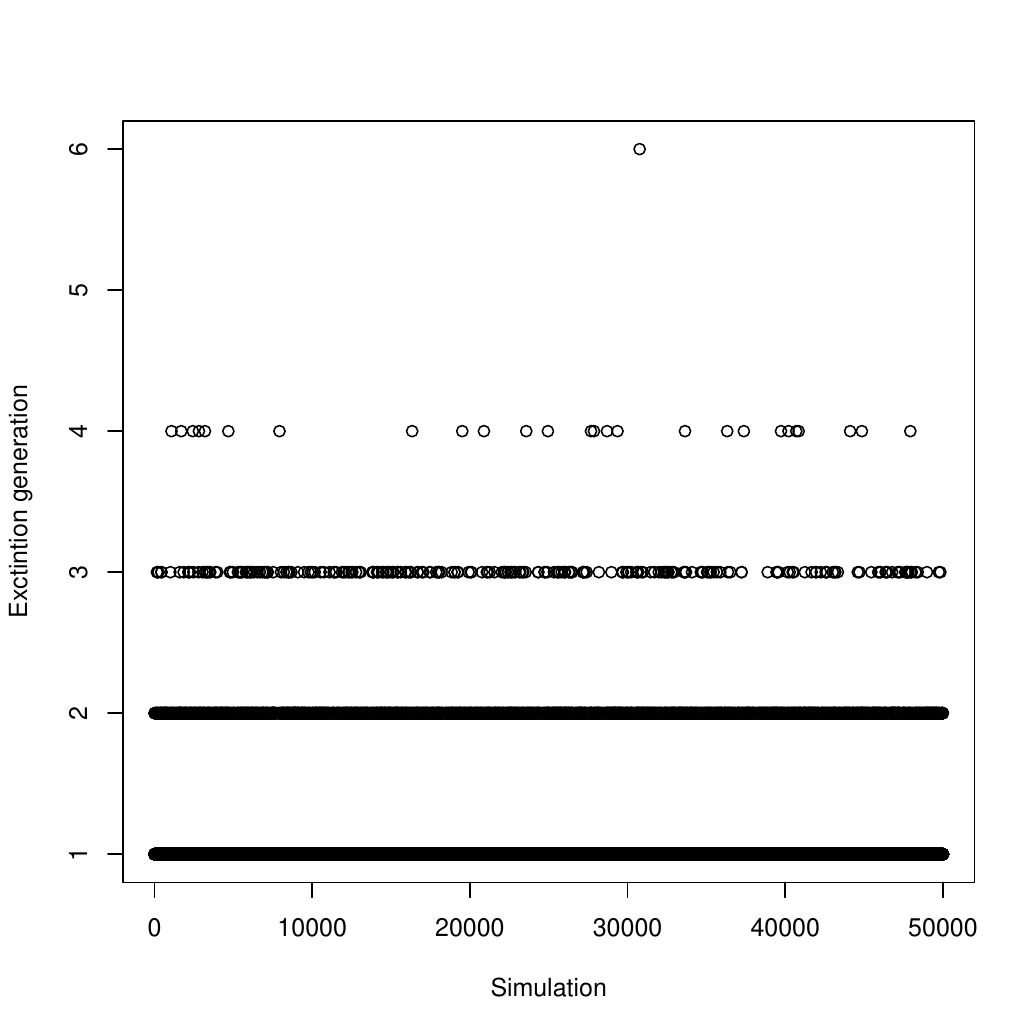}    \\
       (a) & (b) 
       \end{tabular}
       \end{figure}
       \begin{figure}[H]
\centering
   \begin{tabular}{cc}
       \includegraphics[scale=0.4]{ 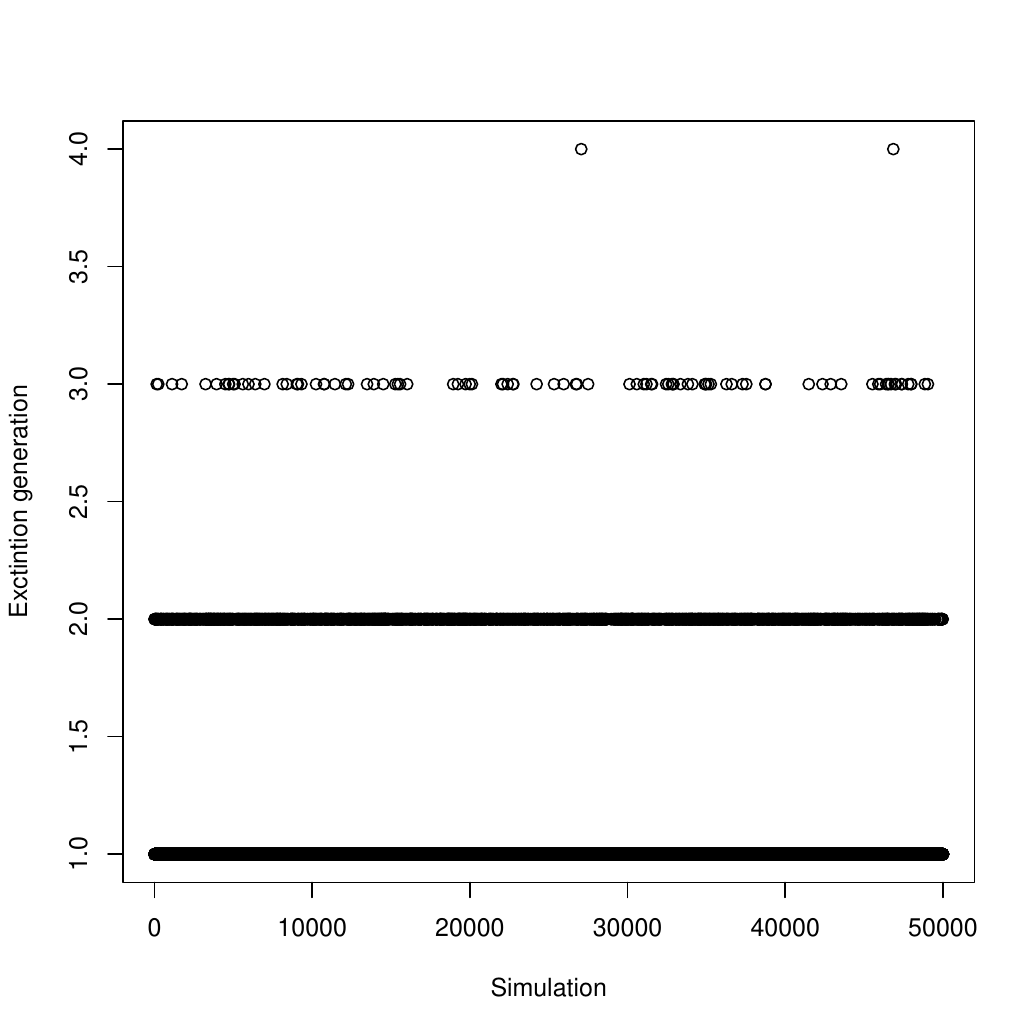} &
    \includegraphics[scale=0.4]{ 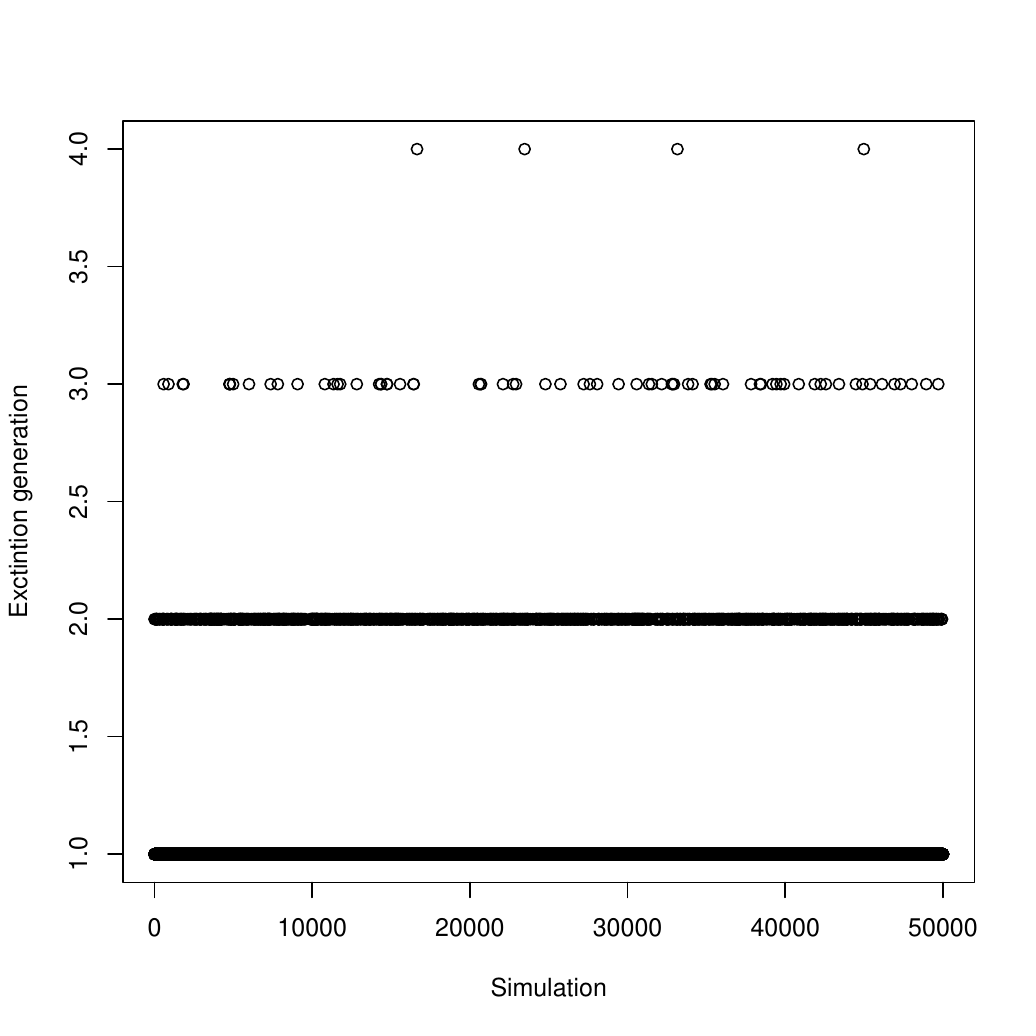}    \\
       (c) & (d) \\
       \includegraphics[scale=0.4]{ 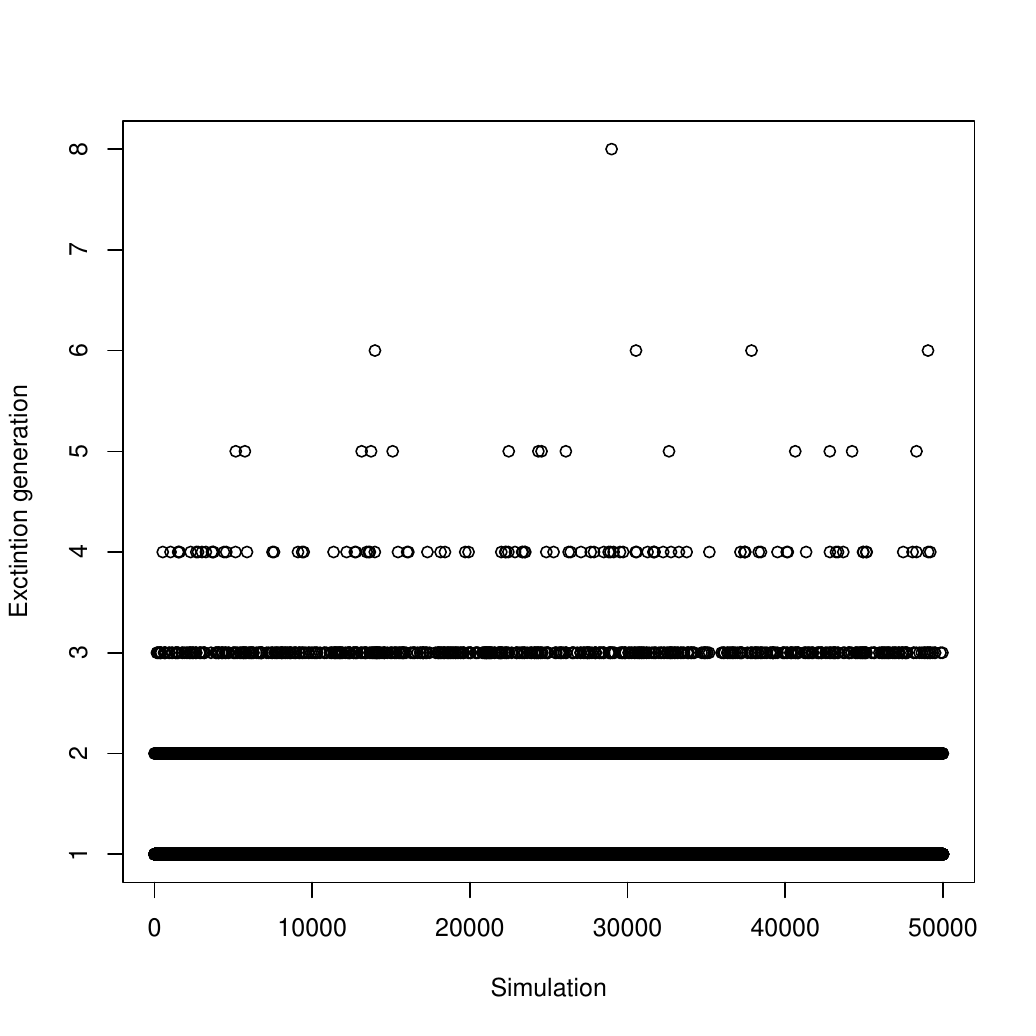} &
    \includegraphics[scale=0.4]{ 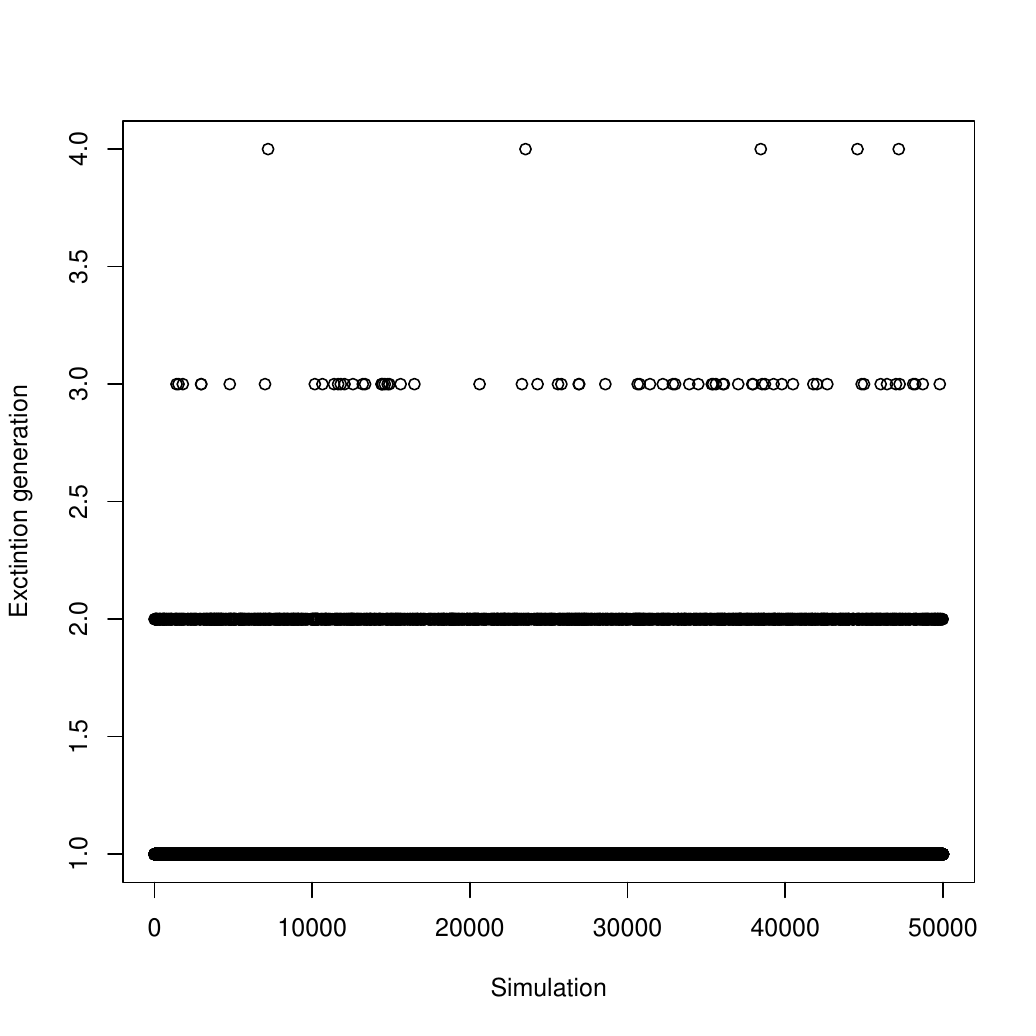}    \\
       (e) & (f) 
   \end{tabular}
    \caption{{\footnotesize Plots of simulated values of $\tau$ when $\theta\leq\ln(2)$, for examples 1A (a), 1B (b), 2A (c), 2B (d), 3A (e) and 3B (f). In all cases we note that the range has decreased significantly and now extinction occurs in a very short time. Hence we cannot apply Theorem 2 in \cite{anderson}, which requires that the discrete distribution has a wide range of values. }}
    \label{fig:enter-label}
\end{figure}
\newpage
Finally, we take a look at the probability $\p\ci \{F_{k+1}=0\}\cup\{M_{k+1}=0\} | Z_k=a\cd$ for $a\geq 1$.

\begin{figure}[H]
    \centering
    \includegraphics[scale=0.5]{ 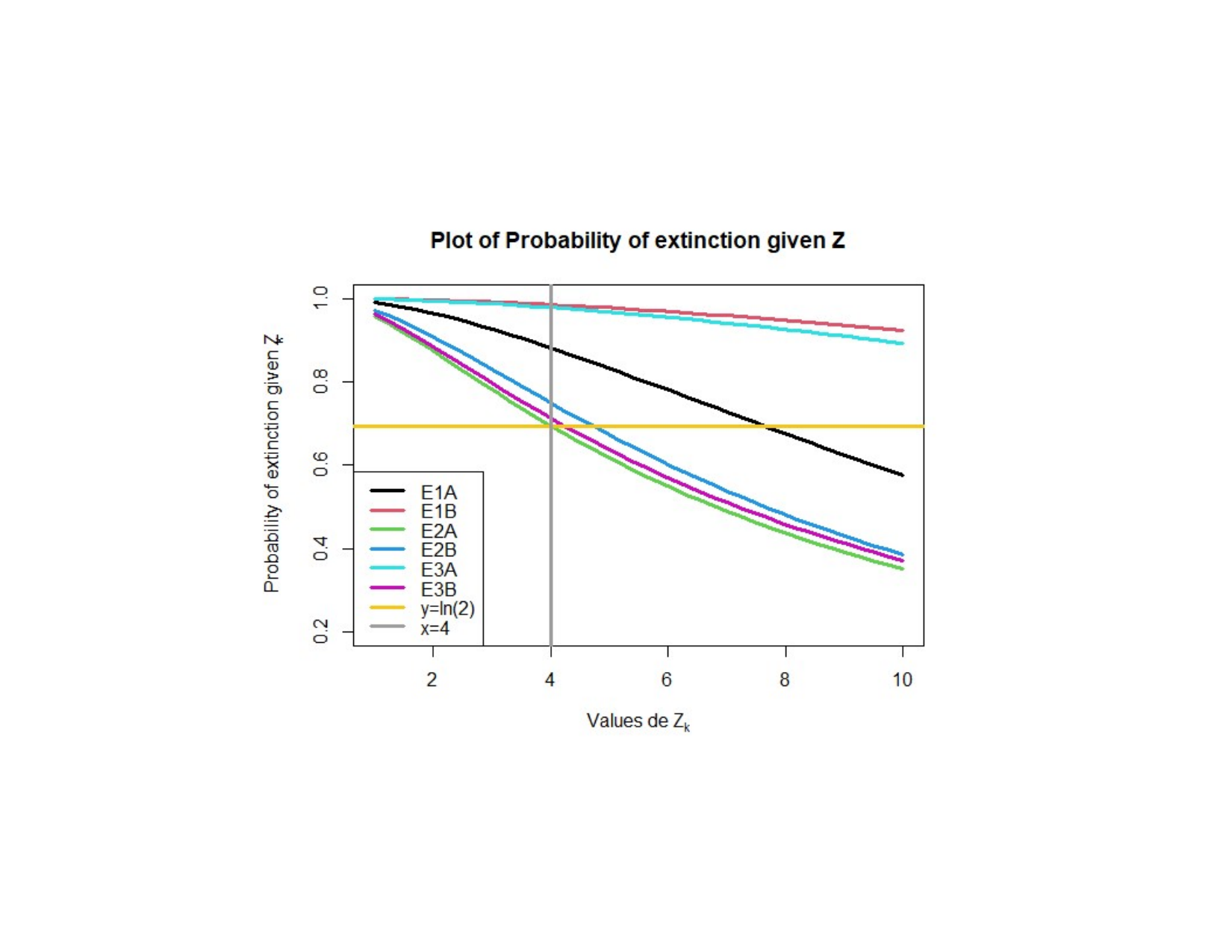}
    \caption{{\footnotesize Plot of $\p\ci \{F_{k+1}=0\}\cup\{M_{k+1}=0\} | Z_k=a\cd$. We note that the value of $\ln(2)$ seems to act as a lower bound for quick extinction. In fact, comparing these to the previously given scatter plots, we see that in all cases the extinction time occurred with the highest frequencies for values between 1 and 4, even in Example 3A when the range is wider than in the other examples.}}
    \label{fig:plotofH}
\end{figure}

\section{Conclusions and further work}

We have presented a formal proof of how, conditioned on almost sure extinction, the distribution of the time to extinction for BGWBP with mating depending on the previous generation, can be approximated using the POT method from EVT. This approximation is very simple and easy to implement, unlike other commonly used approximations in probability theory.

Through our numerical examples we saw the influence of the condition $\theta>\ln(2)$ on the application of the POT method.

A few open problems that result from this work are listed below.

\begin{itemize}
    \item Study of the case when the distributions of $f_1$ and $m_1$ have regularly varying tails and the case of extinction with probability less than one.

    \item A deeper analysis and rigorous proofs of the influence of the condition $\theta>ln(2)$ on the speed of extinction.

    \item Extension of this results to generalizations of this model. For instance the one studied in \cite{bansayeetal}.
\end{itemize}

\textbf{Statements and declarations}

Conflict of interest: The authors declare no conflict of interest.

Funding: The authors declare that this research did not receive any specific grant from funding agencies in the public, commercial, or not-for-profit sectors.

\end{document}